\documentclass[a4paper,fleqn]{cas-sc}
\usepackage{ifpdf}

\usepackage{graphicx}
\graphicspath{{figures/}}
\usepackage{epsfig} 
\usepackage{epstopdf} 

\usepackage{amsmath}
\usepackage{amssymb}
\usepackage{amsfonts} 
\usepackage{mathrsfs}
\usepackage{amsthm}

\usepackage{array}

\usepackage{float}
\usepackage{subcaption} 
\usepackage{wrapfig}
\usepackage[font=small]{caption}

\usepackage{hyperref}

\usepackage{csquotes} 
\usepackage{siunitx} 
\usepackage{gensymb} 

\theoremstyle{definition}
\newtheorem{theorem}{Theorem}[section]

\newtheorem{proposition}[theorem]{Proposition}
\newtheorem{corollary}[theorem]{Corollary}

\newtheorem{remark}[theorem]{Remark}

\newtheorem{lemma}[theorem]{Lemma}

\newcommand{\half}{\frac{1}{2}}

\newcommand{\bb}[1]{\mathbb{#1}}
\newcommand{\cl}[1]{\mathcal{#1}}

\newcommand{\R}[1]{\bb{R}^{#1}}

\newcommand{\TwoTwoMat}[4]{
\begin{pmatrix}
#1 & #2 \\
#3 & #4
\end{pmatrix}}
\newcommand{\TwoVec}[2]{
\begin{pmatrix}
#1\\
#2 
\end{pmatrix}}

\newcommand{\ThrVec}[3]{
\begin{pmatrix}
#1\\
#2\\
#3
\end{pmatrix}}

\newcommand{\map}[3]{{#1}:{#2}\rightarrow {#3}}

\newcommand{\pair}[2]{\left\langle \left.  #1  \right|  #2 \right\rangle}

\newcommand{\extd}{\textrm{d}}
\newcommand{\dt}{\frac{d}{dt}}

\newcommand{\pH}{port-Hamiltonian }

\newcommand{\parGrad}[2]{\frac{\partial #1}{\partial #2}}
\newcommand{\subTxt}[2]{#1_{\text{\normalfont #2}}}
\newcommand{\twist}[3]{\mathcal{V}_{#1}^{#3,#2}}
\newcommand{\twistV}{\mathcal{V}}
\newcommand{\twistTd}[3]{\tilde{\mathcal{V}}_{#1}^{#3,#2}}
\newcommand{\twistDot}[3]{\dot{\mathcal{V}}_{#1}^{#3,#2}}

\newcommand{\angVel}[3]{\omega_{#1}^{#3,#2}}

\newcommand{\angVelTd}[3]{\tilde{\omega}_{#1}^{#3,#2}}

\newcommand{\linVel}[3]{v_{#1}^{#3,#2}}

\newcommand{\baseVel}{\text{v}}
\newcommand{\baseIner}{I}
\newcommand{\baseMom}{\text{p}}
\newcommand{\baseWrench}{\text{w}}

\newcommand{\spBaseVel}{\mathfrak{g}_b}

\begin{document}
\let\WriteBookmarks\relax
\def\floatpagepagefraction{1}
\def\textpagefraction{.001}
\shorttitle{Port-Hamiltonian VMS}
\shortauthors{R. Rashad}
\title[mode = title]{The Port-Hamiltonian Structure of Vehicle Manipulator Systems}
\tnotemark[1]
\tnotetext[1]{This work has been funded by King Fahd University of Petroleum and Minerals under Project EC251005.}

\author[1]{Ramy Rashad}[type=editor,
                        role=Researcher,
                        orcid=0000-0002-9083-0504]
\cormark[1]
\fnmark[1]
\ead{ramy.rashad@kfupm.edu.sa}
\affiliation[1]{organization={Control and Instrumentation Engineering department and Interdisciplinary Research Center for Smart Mobility and Logistics, King Fahd University of Petroleum and Minerals},
                city={Dhahran},
                postcode={34464}, 
                country={Saudi Arabia}}

\begin{abstract}
    This paper presents a port-Hamiltonian formulation of vehicle-manipulator systems (VMS),
a broad class of robotic systems including aerial manipulators, underwater manipulators, 
space robots, and omnidirectional mobile manipulators. 
Unlike existing Lagrangian formulations that obscure the underlying 
energetic structure, the proposed port-Hamiltonian formulation explicitly reveals the 
energy flow and conservation properties of these complex mechanical systems. 
We derive the port-Hamiltonian dynamics from first principles using Hamiltonian reduction theory.
Two complementary formulations are presented: a standard form that directly exposes 
the energy structure, and an inertially-decoupled form that leverages the principal bundle structure of the VMS configuration space and is 
particularly suitable for control design and numerical simulation. 
The coordinate-free geometric approach we follow avoids singularities associated with local parameterizations of the base orientation.
We rigorously establish  the mathematical equivalence between our port-Hamiltonian formulations and existing reduced Euler-Lagrange 
and Boltzmann-Hamel equations found in the robotics and geometric mechanics literature. 

\end{abstract}

\begin{keywords}
port-Hamiltonian, geometric mechanics, vehicle-manipulator systems, floating-base manipulators, mobile manipulators.
\end{keywords}

\maketitle

\section{Introduction}\label{sec:intro}
Vehicle manipulator systems (VMS) are a broad class of robotic systems that consist of a manipulator mounted on a mobile base.
Examples of VMS include aerial manipulators~\cite{Nava2020}, underwater manipulators~\cite{Simetti2021}, 
space robots~\cite{giordano2021compliant}, and omnidirectional mobile manipulators~\cite{kim2019whole}.

The equations of motion of VMS have been extensively studied in the robotics and geometric mechanics literature and are indespensable for 
analysis, model-based control, parameter identification, and numerical simulation.
In the robotics community, the equations of motion of a VMS are typically derived by the recursive Newton-Euler approach starting from the governing equations 
of a single rigid body \cite{Huang2017}.
In the geometric mechanics community, the equations of motion are usually derived using variational principles on the tangent bundle of the configuration space.


The equations of motion of a VMS possess several hidden structures that can be exploited for different robotics applications.
On one hand, there is the kinematic tree structure of manipulators that can be exploited for computational efficiency in calculating the mass and Coriolis-centrifugal
matrices as well as their derivatives needed for model-based control and optimization \cite{Garofalo2013,Echeandia2021}.
Another aspect is the geometric structure given by the non-Euclidean nature of the configuration space which is fundamental to modern treatments of rigid-body dynamics and robotics 
\cite{lynch2017modern,murray2017mathematical}.
In particular, the configuration space of a VMS has a principal bundle structure which has been exploited for analysis \cite{Saccon2017},
 control \cite{giordano2021compliant} and locomotion planning \cite{hatton2015nonconservativity}.

There is one structure though that has not been fully exploited in the robotics literature, namely the energetic structure of the equations of motion.
Most existing works formulate the equations of motion using the Lagrangian approach which usually obscures such underlying structure.
The energetic structure is usually represented using symplectic or Poisson structures in the Hamiltonian framework, \textit{cf.} \cite{Chhabra2015}.
However, the Hamiltonian approach is not widely used in robotics due to its inapplicability to control synthesis 
as most Hamiltonian systems represent isolated systems that preserve energy.
On the other hand, the \pH framework \cite{van2014port,duindam2009modeling_b} extends the Hamiltonian approach (using port-based modeling) to 
open dynamical systems that can exchange energy with their environment through power ports.

The \pH framework has the ability to highlight the energetic and geometric structure of complex mechanical systems (e.g., \cite{Rashad2025,RafeeNekoo2025, PONCE2024434, WARSEWA20211528}), which can then be exploited for new insights in analysis, simulation and control (e.g., \cite{PONCE2026116403,DONG2025116055, DEJONG2026116775}).
In the robotics and control communities, there have been many successful stories of the insight that energy-based thinking has brought to the field,
such as impedance and admittance control \cite{Hogan1985ImpedanceControl}, energy-shaping control \cite{ortega2002putting},
virtual energy tanks \cite{califano2022use}, control by interconnection \cite{ortega2008control}, and learning robot dynamics \cite{Duong2024}.

The main goal of this paper is to present a \pH formulation of VMS dynamics that reveals the underlying energetic structure of these systems.
To the best of the authors' knowledge, this is the first work that systematically derives the \pH dynamics of VMS from first principles using Hamiltonian reduction theory.
We present two complementary \pH formulations of the VMS dynamics: one form in the standard velocity variables and another inertially-decoupled form in the principal bundle variables.

Our work relies on the geometric formulation of robotic systems using Lie group theory pioneered by Brockett \cite{Brockett1984} and evolved 
over the years by many researchers \cite{lynch2017modern,murray2017mathematical}.
In particular, our work was inspired by the recent works of Mishra et al. \cite{Mishra2023} and Moghaddam and Chhabra \cite{Moghaddam2024} 
that derived the Lagrangian equations of motion on the principal bundle configuration space.
In the \pH literature, Duindam and Stramigioli \cite{Duindam2008} presented a \pH formulation of generic open-chain mechanisms, with generic 
holonomic and nonholonomic joints, using the Boltzmann-Hamel equations.
Their work considered generic quasi-velocities and neither considered the principal bundle structure of VMS nor a systematic derivation of the \pH dynamics
 from first principles.

The main contributions of this work can be explicitly summarized as follows:
\begin{enumerate}
    \item Formulate the VMS dynamics in the \pH framework.
    \item Derive the \pH form from first principles using Hamiltonian reduction and not by manipulating the Lagrangian equations of motion.
    \item Present an inertially-decoupled \pH formulation that leverages the principal bundle structure of the configuration space.
    \item Show the equivalence between the proposed dynamic models and Lagrangian formulations in the literature.
    \item Consider symmetry-breaking generalized forces (such as gravity, actuator torques, and end effector wrenches) which are usually
     neglected in geometric mechanics treatments but essential for robotics applications.
\end{enumerate}

The remainder of this paper is organized as follows.
Section~\ref{sec:math_prelim} introduces preliminaries and establishes the notation used throughout the paper.
We derive the \pH formulations of a fixed-base manipulator and moving-base separately in Sections ~\ref{sec:manip_dyn} and \ref{sec:base_dyn}, respectively, as 
intermediate steps towards the full VMS dynamics.
In Section~\ref{sec:fm_dyn}, we present our two proposed formulations for VMS dynamics.
Section~\ref{sec:fm_dyn_lagrangian} demonstrates the equivalence between our proposed formulations and existing ones in the literature.
Section~\ref{sec:discussion} discusses the implications and potential applications of the proposed framework.
Finally, we conclude in Section~\ref{sec:conclusion}.

\section{Mathematical Preliminaries} \label{sec:math_prelim}
\subsection{Rigid Body Kinematics and Dynamics on $SE(3)$}

Let $\{a\}$ denote an orthonormal frame with origin at point $o_a$ and axes aligned with unit vectors $\{\hat{x}_a, \hat{y}_a, \hat{z}_a\}$.
The relative pose of frame $\{b\}$ with respect to frame $\{a\}$ is represented by the pair $(R_b^a, \xi_b^a) \in SE(3)$, where $R_b^a \in SO(3)$ 
denotes the relative orientation of the two frames and $\xi_b^a \in \mathbb{R}^3$ denotes the position of the origin $o_b$ expressed in frame $\{a\}$.
We shall use the homogeneous transformation matrix to represent the relative pose of frame $\{b\}$ with respect to frame $\{a\}$ as
\begin{equation}
    H_b^a = \TwoTwoMat{R_b^a}{\xi_b^a}{0_{1\times 3}}{1} \in SE(3).    
\end{equation}
With an abuse of notation, we shall also denote the space of homogeneous transformation matrices as $SE(3)$, the special Euclidean group in three dimensions.

The relative twist (generalized velocity) between any two frames $\{a\}$ and $\{b\}$ is represented by
\begin{equation}
    \twistTd{a}{b}{a} = \TwoTwoMat{\angVelTd{a}{b}{a}}{\linVel{a}{b}{a}}{0_{1\times3}}{0} :=  H_b^a\dot{H}_a^b  \in se(3),
\end{equation}
\begin{equation}
    \twistTd{a}{b}{b} = \TwoTwoMat{\angVelTd{a}{b}{b}}{\linVel{a}{b}{b}}{0_{1\times3}}{0} :=  \dot{H}_a^b H_b^a  \in se(3),
\end{equation}
where both twists represent the same physical quantity, but expressed in different frames, while $se(3)$ denotes the Lie algebra associated with $SE(3)$.
We denote by $\linVel{a}{b}{\star} \in \R{3}$ the linear velocity component of the twist and by $\angVelTd{a}{b}{\star} \in so(3)$ the skew-symmetric matrix representing 
the angular velocity component of the twist, where $\star$ indicates the frame in which the twist is expressed.

We can associate to any $\tilde{\omega}\in so(3)$ the vector $\omega \in \bb{R}^3 $ defined such that
$\forall x \in \R{3},\ \tilde{\omega} x = \omega\times x,$ i.e., the cross product of $\omega$ with $x$.
Consequently, a twist $\twistTd{a}{b}{\star}$ can then be represented as a 6-dimensional vector as
\begin{equation}\label{eq:twist_vector}    
    \twist{a}{b}{\star} = \TwoVec{\angVel{a}{b}{\star}}{\linVel{a}{b}{\star}}\in \R{6}.
\end{equation}
With an abuse of notation, we shall refer to both $\twistTd{a}{b}{\star} \in se(3)$ and $\twist{a}{b}{\star} \in \R{6}$ as the twist,
representing the relative generalized velocity between frames $\{a\}$ and $\{b\}$.

Consider a spatial (inertial) reference frame, denoted by $\{s\}$, and a body-fixed frame attached to a rigid body, denoted by $\{b\}$.
We shall refer to the relative twist $\twist{b}{s}{b}$ as the body twist of that rigid body.
The equations of motion of the rigid body are given by the Euler-Poincar\'e equations \cite{bullo2019geometric} as
\begin{equation} \label{eq:rigid_body_dynamics}
    \cl{I}^b_b \twistDot{b}{s}{b} = ad_{\twist{b}{s}{b}}^\top \cl{I}^b_b \twist{b}{s}{b} + \cl{W}^b,
\end{equation}
where $\cl{I}^\star_b \in \R{6\times6}$ denotes the generalized inertia matrix of the rigid body expressed in frame $\{\star\}$,
$\cl{W}^\star \in (\R{6})^*$ denotes the external wrench acting on the rigid body expressed in frame $\{\star\}$, and $ad_{\twistV}^\top \in \R{6\times6}$ denotes the 
dual to the adjoint operator $ad_{\twistV} \in \R{6\times6}$ associated with the Lie algebra $se(3)$,
defined as
\begin{equation}
    ad_{\twistV} := \TwoTwoMat{\tilde{\omega}}{0_{3\times 3}}{\tilde{v}}{\tilde{\omega}}, \qquad \twistV = \TwoVec{\omega}{v} \in \R{6}. 
\end{equation}


Twists and wrenches transform between different frames according to the adjoint and co-adjoint maps of the Lie group $SE(3)$, respectively, given by
\begin{equation}
    \twist{b}{s}{a} = Ad_{H_b^a} \twist{b}{s}{b}, \qquad \cl{W}^a = Ad_{H_b^a}^{-\top} \cl{W}^b,
\end{equation}
where
\begin{equation}
    Ad_{H_b^a} = \TwoTwoMat{R_b^a}{0_{3\times 3}}{\tilde{\xi}_b^a R_b^a}{R_b^a},
\end{equation}
which highlights how twists and wrenches belong to different spaces. We emphasize this by indicating the space of wrenches as $(\R{6})^*$.
It is worth noting that \eqref{eq:rigid_body_dynamics} is invariant under coordinate changes of the body-fixed frame \cite{lynch2017modern,Hong2022}.

\subsection{Joint Kinematics and Dynamics on Lie Subgroups of $SE(3)$}
Subgroups of $SE(3)$ represent constrained displacements of a rigid body that arise due to the presence of joints.
We introduce in this section the minimum geometric background required to describe kinematics and dynamics of relevant joints that arise in common VMS.
The reader is referred to \cite{duindam2009modeling_a,chhabra2014unified} for a detailed treatment of this geometric approach for global parametrization of joints.

Consider a joint connecting two rigid bodies, denoted by body $i$ and body $j$, respectively.
The displacement subgroup of $SE(3)$ associated with the joint connecting body $i$ to body $j$ is denoted by $G_i^j \subseteq SE(3)$, and 
represents the set of all possible relative poses of body $i$ with respect to body $j$ that are allowed by the joint.
We have that $G_i^j$ is a $b$-dimensional Lie subgroup of $SE(3)$, where $b \leq 6$ denotes the number of degrees of freedom (DoF) of the joint.
We denote by $q_i \in G_i$ the joint configuration of joint $i$ connecting body $i$ to body $j$, where $G_i$ denotes the configuration manifold of joint $i$ and is 
a $b$-dimensional Lie group isomorphic to $G_i^j$.
The isomophism between $G_i$ and $G_i^j$ is denoted by $\map{\varphi_i}{G_i}{G_i^j}$ such that the relative pose of body $i$ with respect to body $j$ is given by
\begin{equation}
    H_i^j = \varphi_i(q_i) \in G_i^j.
\end{equation}

Let $\mathfrak{g}_i^j\subseteq se(3)$ and $\mathfrak{g}_i$ denote the Lie algebras associated with the Lie groups $G_i^j$ and $G_i$, respectively.
An element in $\mathfrak{g}_i$ can be identified with the joint velocity vector of joint $i$, denoted by $\baseVel_i \in \bb{R}^b$, while 
an element in $\mathfrak{g}_i^j$ can be identified with the relative twist $\twist{i}{j}{i} \in \R{6}$ allowed by the joint.
For simplicity of notation, we shall simply say that $\baseVel_i \in \mathfrak{g}_i$ and $\twist{i}{j}{i}\in \mathfrak{g}_i^j$ from now on.
We have that $\baseVel_i $ is related to the time derivative of the joint configuration $q_i$ by the map $\map{\chi_{q_i}}{\mathfrak{g}_i}{T_{q_i} G_i}$, such that
\begin{equation}\label{eq:chi_map_def}
    \dot{q}_i = \chi_{q_i}(\baseVel_i) \in T_{q_i} G_i.
\end{equation}
The induced Lie algebra isomorphism that relates the joint velocities $\text{v}_i$ to the relative twist $\twist{i}{j}{i}$ of body $i$ with respect to body $j$ is denoted by
$\map{\iota_i}{\mathfrak{g}_i}{\mathfrak{g}_i^j}$ such that
\begin{equation}
\twist{i}{j}{i} = \iota_i(\baseVel_i) = \cl{S}_i^{i,j} \baseVel_i  \in \mathfrak{g}_i^j,
\end{equation}
where $\iota_i$ is a linear map that can be represented by the configuration-independent matrix $ \cl{S}_i^{i,j} \in \bb{R}^{6 \times b}$.
The dual map $\map{\iota_i^*}{(\mathfrak{g}_i^j)^*}{\mathfrak{g}_i^*}$ relates the wrenches $\cl{W}^i \in (\mathfrak{g}_i^j)^* \cong (\R{6})^*$ 
acting on body $i$ to the joint's generalized forces $\baseWrench_i \in \mathfrak{g}_i^*\cong (\R{b})^*$ such that
$$(\cl{W}^i)^\top \twist{i}{j}{i} = (\baseWrench_i)^\top \baseVel_i.$$

For the reader's convenience, we present in Appendix \ref{appendix:joint_subgroups} the kinematics and dynamics of common joints used in VMS systems, 
including 1-DoF revolute and prismatic joints, 3-DoF planar, and 6-DoF floating joints.

\subsection{Standard Formulation of VMS Dynamics}
We consider a VMS composed of a robotic moving base and a serial kinematic chain consisting of $n$ rigid links connected by $n$ actuated 1 DoF joints.
We consider the case where the base is free to translate and rotate without any kinematic constraints.
The base movement is modeled either with a 6-DoF floating or a 3-DoF planar (virtual) joint.
This formulation allows for modeling a wide range of VMS, including aerial, underwater, space, or omnidirectional ground bases.
For ease of presentation, we consider only single-arm VMS, however, 
the presented formulation is extendible to multi-arm manipulators.

\begin{figure}
    \centering
    \includegraphics[width=0.99\columnwidth]{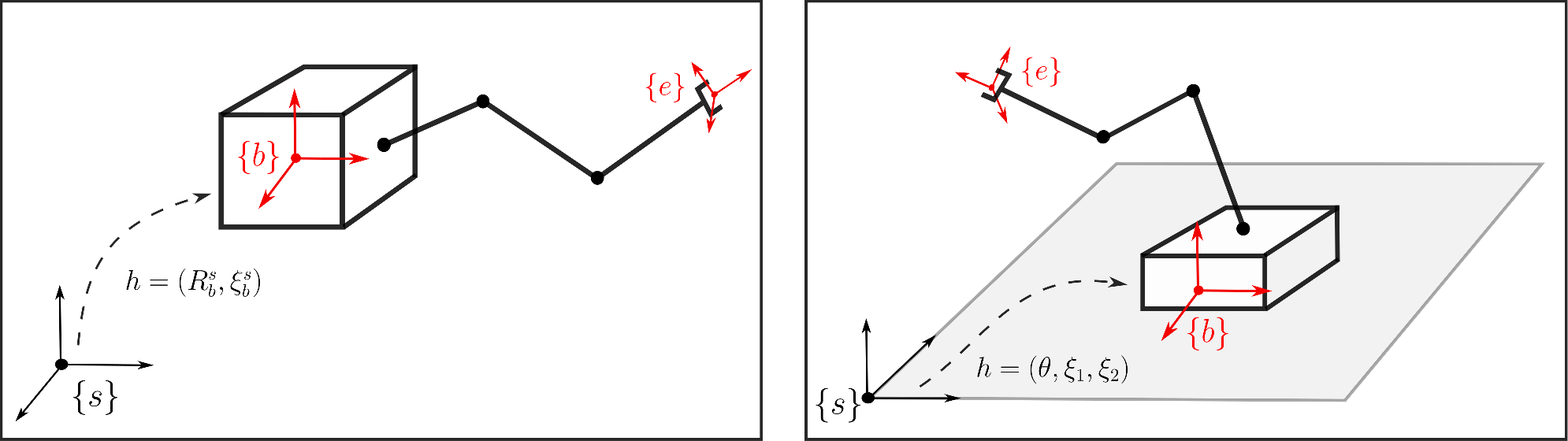}
    \caption{Illustration of vehicle-manipulator systems. Left figure shows a floating-base manipulator $h\in G_b =  SE(3)$ while the right figure shows a ground mobile manipulator $h\in G_b = \bb{S}^1\times \R{2}$.}
    \label{fig:floating_base_manipulator}
\end{figure}

We denote by $\{b\}$ the body-fixed frame attached to the moving base and by $\{i\}$ the body-fixed frame attached to link $i$ of the manipulator, for $i=1,\ldots,n$.
We denote by $\{s\}$ and $\{e\}$ the spatial (inertial) and end effector reference frames, respectively, as illustrated in Fig. \ref{fig:floating_base_manipulator}.
The configuration of the manipulator subsystem is described by the pair $q:=(q_1,\cdots,q_n) \in Q_m$, such that $q_i \in G_i$ denotes the joint configuration of 
joint $i$ connecting link $i$ to its parent link $i-1$, with $\{0\} = \{b\}$. We denote the manipulator configuration space by $Q_m := G_1 \times G_2 \times \cdots \times G_n$.
The configuration of the moving base is described by $h \in G_b$, such that the pose of the base frame $\{b\}$ with respect to the spatial frame $\{s\}$ is given by $H_b^s= \varphi_b(h)$.
The overall configuration of the VMS is then given by the pair $(h,q) \in Q:= G_b \times Q_m$.

By a left trivialization of the tangent bundle $TG_b$, the generalized velocity of the VMS can be represented by the pair
$(\baseVel, \dot{q}) \in \mathfrak{g}_b \times T_q Q_m$, such that $\dot{h} = \chi_h(\baseVel) \in T_h G_b$, where $\mathfrak{g}_b$ denotes the Lie algebra 
associated with the Lie group $G_b$,
We denote by $\baseVel \in \mathfrak{g}_b$ the twist of the moving base and by $\dot{q} \in T_q Q_m$ the joint velocities of the manipulator.
Note that we have that $T_qQ_m \cong \R{n}$ and $\spBaseVel\cong \R{b}$, where $b=3$ or $6$ depending on whether the base's motion is planar or floating, respectively.

\subsubsection{Forward Kinematics}
The forward kinematics of all bodies in the VMS can be computed recursively using the following equations:
\begin{align*}
    H_b^s = \varphi_b(h),\qquad \qquad
    H_i^s = H_b^s H_i^b, \qquad \qquad
    H_i^b = H_{i-1}^b H_i^{i-1}, \qquad i=1,\ldots,n,
\end{align*}
where $H_i^{i-1} = \varphi_i(q_i) \in G_i^j$ denotes the relative pose of link $i$ with respect to its parent link $i-1$ allowed by joint $i$.
Note that in general $H_i^b$ depends on $q$ and can be computed using the product of exponentials formula \cite{lynch2017modern}.

\subsubsection{Differential Kinematics}
The body twist of the moving base is given by
\begin{equation}\label{eq:base_twist}
    \twist{b}{s}{b} = \cl{S}_b^{b,s} \baseVel,
\end{equation}
with $ \cl{S}_b^{b,s} \in \R{6\times b}$ denoting the matrix representation of the mapping $\iota_b$.
The body twist of each link $\{i\}$ can be expressed as
\begin{equation}\label{eq:link_twist}
    \twist{i}{s}{i} = J_i^{i,s} (q) \TwoVec{\baseVel}{\dot{q}},
\end{equation}
where $\map{J_i^{i,s}(q)}{\spBaseVel\times T_q Q_m}{\mathfrak{g}_i^j}$ denotes the geometric Jacobian of link $i$ with respect to the spatial frame $\{s\}$, expressed in frame $\{i\}$, which 
can be expressed as
\begin{equation}\label{eq:jacobian_partition}
    J_i^{i,s}(q) =  \left[ \text{Ad}_{H_b^i(q)}\cl{S}_b^{b,s} , J_i^{i,b}(q)\right],
\end{equation}
with $\map{J_i^{i,b}(q)}{T_q Q_m}{\mathfrak{g}_i^j}$ denoting the geometric Jacobian of link $i$ with respect to the base frame $\{b\}$, expressed in frame $\{i\}$, 
which can be computed as
\begin{equation*}
    J_i^{i,b}(q) =  \left[ \text{Ad}_{H_1^i(q)}\cl{S}_1^{1,0} , \text{Ad}_{H_2^i(q)}\cl{S}_2^{2,1}, \cdots, \cl{S}_i^{i,i-1},  0_{6\times (n-i)}\right],
\end{equation*}
with $ \cl{S}_i^{i,i-1} \in \R{6}$ denoting the matrix representation of the mapping $\iota_i$, for $i=1,\ldots,n$.


\subsubsection{Equations of Motion}
The total kinetic energy of the VMS is given by the sum of the kinetic energies of the moving base and each link as
\begin{equation}
    \subTxt{E}{kin} = \frac{1}{2} (\twist{b}{s}{b} )^\top \cl{I}_b^b \twist{b}{s}{b} + \frac{1}{2} \sum_{i=1}^n (\twist{ i}{s}{i})^\top \cl{I}_i^i \twist{ i}{s}{i},
\end{equation}
where $\cl{I}_b^b \in \R{b\times b}$ denotes the generalized inertia matrix of the moving base expressed in frame $\{b\}$, while $\cl{I}_i^i \in \R{6\times 6}$ denotes the generalized inertia matrix of link $i$ expressed in frame $\{i\}$.
Using \eqref{eq:base_twist} and \eqref{eq:link_twist} and the partioning in \eqref{eq:jacobian_partition}, the kinetic energy can be expressed in the 
standard quadratic form as \cite{Mishra2023,Moghaddam2024}
\begin{equation}\label{eq:fm_kinetic_energy}
    \subTxt{E}{kin} = \frac{1}{2} \TwoVec{\baseVel}{\dot{q}}^\top \cl{M}(q) \TwoVec{\baseVel}{\dot{q}},
\end{equation}
where $\map{\cl{M}(q)}{\spBaseVel\times T_q Q_m}{\spBaseVel^*\times T_q^* Q_m}$ denotes the mass matrix of the VMS given by
\begin{equation}
    \cl{M}(q) = \TwoTwoMat{M_b(q)}{M_{bm}(q)}{M_{bm}^\top(q)}{M_m(q)},
\end{equation}
with
\begin{align}
    M_b(q) & = (\cl{S}_b^{b,s})^\top \left[ \cl{I}_b^b + \sum_{i=1}^n \text{Ad}_{H_b^i(q)}^\top \cl{I}_i^i \text{Ad}_{H_b^i(q)} \right] \cl{S}_b^{b,s}, \label{eq:mass_matrix_Mb} \\
    M_{bm}(q) & = \sum_{i=1}^n (\cl{S}_b^{b,s})^\top \text{Ad}_{H_b^i(q)}^\top \cl{I}_i^i J_i^{i,b}(q), \label{eq:mass_matrix_Mbm}\\
    M_m(q) & = \sum_{i=1}^n (J_i^{i,b}(q))^\top \cl{I}_i^i J_i^{i,b}(q), \label{eq:mass_matrix_Mm}
\end{align}
referred to as the locked, coupling, and manipulator inertia matrices, respectively.

The equations of motion of the VMS can then be derived using the recursive Newton-Euler algorithm \cite{lynch2017modern} and expressed in the Lagrangian form as
\begin{equation}\label{eq:floating_base_Lagrangian_dynamics}
    \cl{M}(q) \TwoVec{\dot{\baseVel}}{\ddot{q}} + \cl{C}(\baseVel,q,\dot{q}) \TwoVec{\baseVel}{\dot{q}} = \TwoVec{\baseWrench}{\tau},
\end{equation}
where $\map{\cl{C}(\baseVel,q,\dot{q}) }{\spBaseVel\times T_q Q_m}{\spBaseVel^*\times T_q^* Q_m}$ contains the Coriolis and centrifugal terms,
$\baseWrench \in \spBaseVel^*$ denotes the external wrench acting on the moving base expressed in frame $\{b\}$, and $\tau \in T_q^* Q_m$ denotes the joint generalized forces.
The explicit expression of the Coriolis matrix $\cl{C}(\baseVel,q,\dot{q})$ can be found in \cite{Mishra2023}.

The above equations of motion lack structure that is essential for model-based control design and analysis. For instance, it is not even straightforward to
show, without tedious factorization of $\cl{C}$, conservation of energy or passivity properties from \eqref{eq:floating_base_Lagrangian_dynamics} due to the lack of this structure.
The main contribution of this paper is to reveal this structure by formulating the dynamics of VMS in the port-Hamiltonian framework.

\begin{table}
    \centering
    \begin{tabular}{|c|c|c|}
        \hline
        \textbf{Map/Variable} & \textbf{Coordinate-free} & \textbf{Coordinate-based} \\
        \hline
        $(\baseVel,\dot{q})$ & $\spBaseVel\times T_qQ_m$ & $\R{b} \times \R{n}$ \\
        $(\baseWrench,\tau)$ & $\spBaseVel^*\times T_q^* Q_m$ & $\R{b} \times \R{n}$ \\
        $J_i^{i,s}(q)$ & $\spBaseVel\times T_q Q_m \rightarrow \mathfrak{g}_i^s$ & $\R{6\times(b+n)}$ \\
        $J_i^{i,b}(q)$ & $T_q Q_m \rightarrow \mathfrak{g}_i^b$ & $\R{6\times n}$ \\
        $M_b(q)$ & $\spBaseVel\rightarrow{\spBaseVel^*}$ & $\R{b\times b}$ \\
        $M_{bm}(q)$ & $T_q Q_m \rightarrow{\spBaseVel^*}$ & $\R{b\times n}$ \\
        $M_m(q)$ & $T_q Q_m \rightarrow{T_q^* Q_m}$ & $\R{n\times n}$\\
        $\cl{M}(q), \cl{C}(\baseVel,q,\dot{q})$ & ${\spBaseVel\times T_q Q_m}\rightarrow{\spBaseVel^*\times T_q^* Q_m}$ & $\R{(b+n)\times(b+n)}$ \\
        \hline
    \end{tabular}
    \caption{Coordinate-free and coordinate-based representations of maps and variables in \eqref{eq:floating_base_Lagrangian_dynamics}}
    \label{table:matrix_representations}
\end{table}

\begin{remark}
    Note that working directly with the pair $(\baseVel, \dot{q}) \in \mathfrak{g}_b \times T_q Q_m$ allows us to avoid the use of local coordinates for representing the
    configuration $(h,q)\in Q$ of the VMS.
    This singularity-free representation is particularly useful when dealing with large rotational motions of the floating base, 
    as it avoids the ambiguities associated with local coordinate representations such as Euler angles.
\end{remark}

\begin{remark}
    We have opted to present the geometric formulation using coordinate-free notation to highlight the underlying geometric structure of each map.
    However, it is more common in the robotics literature to present these maps using matrix and vector representations in a specific coordinate system.
    Therefore, we provide the representations for maps and variables introduced in this section in Table \ref{table:matrix_representations} and we shall do the same
    in the subsequent sections whenever applicable.
\end{remark}

\section{Manipulator Port-Hamiltonian Dynamics} \label{sec:manip_dyn}
In this section we present the \pH formulation of a fixed-base manipulator dynamics including the effects of gravity, actuator torques, 
and interaction wrenches acting on the end-effector.
This section and the next serve as a starting point to ease the subsequent derivation of the \pH dynamics of VMS in Sec. \ref{sec:fm_dyn} as 
well as to contrast the \pH formulation of VMS with that of the base or manipulator alone.

From the geometric mechanics perspective, the starting point is the kinetic energy of an $n$-DoF manipulator which defines 
a Lagrangian system on the tangent bundle $TQ_m$ of the $n$-dimensional configuration manifold $Q_m$.
The Lagrangian function $\map{\subTxt{\cl{L}}{kin}}{TQ_m}{\bb{R}}$ is given by
\begin{equation}\label{eq:manipulator_Lagrangian}
    \subTxt{\cl{L}}{kin}(q,\dot{q}) = \half \dot{q}^\top M_m(q) \dot{q},  
\end{equation}
where the inertia matrix $M_m(q)$ is given by \eqref{eq:mass_matrix_Mm}.

The standard equations of motion of the manipulator are given by the Euler-Lagrange equations corresponding to the Lagrangian $\subTxt{\cl{L}}{kin}$.
Alternatively on the cotangent bundle $T^*Q_m \cong \R{2n}$, the manipulator dynamics can be equivalently represented in \pH form as \cite{duindam2009modeling_a,duindam2009modeling_b}
\begin{align} 
\TwoVec{\dot{q}}{\dot{\pi}} &= J_{\text{sym}} \TwoVec{\parGrad{\subTxt{\cl{H}}{kin}}{q}}{\parGrad{\subTxt{\cl{H}}{kin}}{\pi}} + G \tau, \nonumber\\ 
\dot{q} &= G^\top \TwoVec{\parGrad{\subTxt{\cl{H}}{kin}}{q}}{\parGrad{\subTxt{\cl{H}}{kin}}{\pi}}, \label{eq:manipulator_pH_dynamics}
\end{align}  
where $\tau \in T_q^* Q_m$ denotes external torques acting on the manipulator joints, $\pi \in T^*_q Q_m \cong \R{n}$ denotes the generalized momenta conjugate to $\dot{q}$, defined as 
\begin{equation}\label{eq:manipulator_momentum_def}
    \pi :=\parGrad{\subTxt{\cl{L}}{kin}}{\dot{q}}(q,\dot{q})= M_m(q) \dot{q},
\end{equation}
while $G$ represents the input matrix, and $J_{\text{sym}} = - J_{\text{sym}}^\top$ characterizes the canonical symplectic structure on $T^*Q_m$ given by
$$G:= \TwoVec{0_{n\times n}}{\bb{I}_n},  \qquad J_{\text{sym}} := \TwoTwoMat{0_{n\times n}}{\bb{I}_n}{-\bb{I}_n}{0_{n\times n}}.$$
Moreover, $\map{\subTxt{\cl{H}}{kin}}{T^*Q_m}{\bb{R}}$ denotes the kinetic Hamiltonian function of the manipulator given by
\begin{equation}\label{eq:manipulator_Hamiltonian}
    \subTxt{\cl{H}}{kin}(q,\pi) = \half \pi^\top M_m^{-1}(q) \pi,
\end{equation}
derived from the Legendre transform of the Lagrangian $\subTxt{\cl{L}}{kin}$.
It follows that the partial gradients of the Hamiltonian $\subTxt{\cl{H}}{kin}$ with respect to $q$ and $\pi$, respectively are given by
\begin{align}
\parGrad{\subTxt{\cl{H}}{kin}}{q}(q,\pi) &= - C_m^\top(q,\dot{q}) \dot{q} \in T_q^*Q_m\\
\parGrad{\subTxt{\cl{H}}{kin}}{\pi}(q,\pi) &= \dot{q}\in T_q Q_m
\end{align}
which are expressed as functions of $q$ and $\pi$ through the relation $\dot{q} = M_m^{-1}(q) \pi$,
while $\map{C_m(q,\dot{q})}{T_q Q_m}{T_q^* Q_m}$ denotes the canonical Coriolis-centrifugal matrix of the manipulator satisfying $\dot{M}_m (q) = C_m(q,\dot{q}) + C_m^\top(q,\dot{q})$.\\



\begin{remark}
    We shall sometimes omit the functional dependencies of gradients of functions on their variables for the sake of readability, e.g., writing $\parGrad{\subTxt{\cl{H}}{kin}}{q}$ instead of
    $\parGrad{\subTxt{\cl{H}}{kin}}{q}(q,\pi)$ as in \eqref{eq:manipulator_pH_dynamics} when the context is clear.\\
\end{remark}

\begin{figure}
    \centering
    \includegraphics[width=0.5\columnwidth]{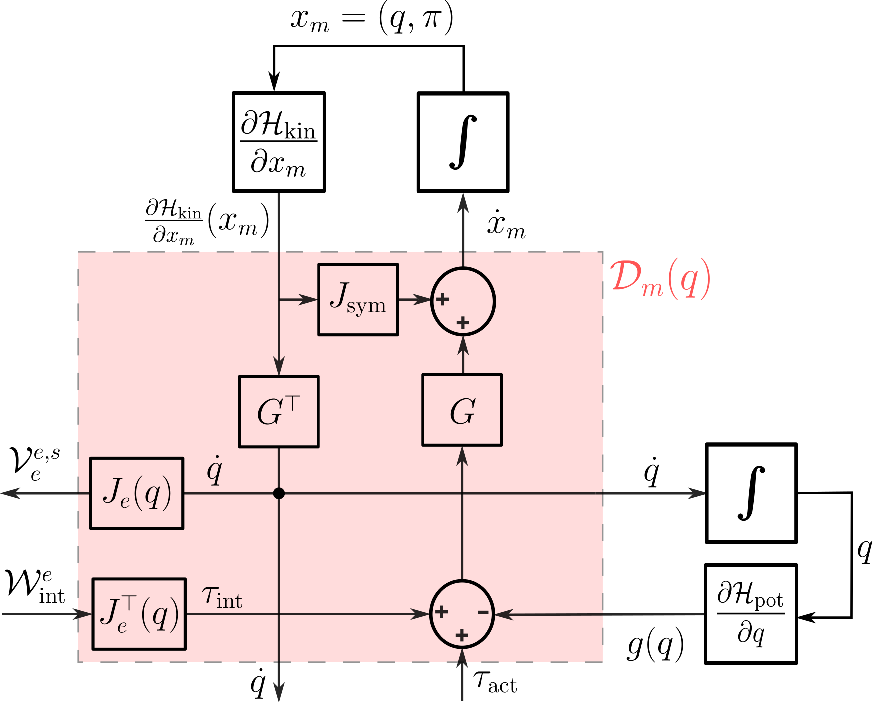}
    \caption{Port-Hamiltonian formulation of fixed-base manipulator dynamics.}
    \label{fig:pH_formulation_manipulator}
\end{figure}

The conservation of kinetic energy is straightforward to verify as follows.
Let $\map{\pair{\cdot}{\cdot}}{\bb{V^*}\times \bb{V}}{\bb{R}}$ denote the dual pairing between elements of any vector space $\bb{V}$ and its dual $\bb{V}^*$.
If we denote by $x_m := (q,\pi) \in T^*Q_m$ the state of the manipulator, the rate of change of the Hamiltonian \eqref{eq:manipulator_Hamiltonian} along trajectories 
of the system dynamics \eqref{eq:manipulator_pH_dynamics} can be expressed as
\begin{equation}\label{eq:manipulator_power_balance}
    \subTxt{\dot{\cl{H}}}{kin} = \pair{\parGrad{\subTxt{\cl{H}}{kin}}{x_m}}{\dot{x}_m}
    = \pair{\parGrad{\subTxt{\cl{H}}{kin}}{x_m}}{J_{\text{sym}} \parGrad{\subTxt{\cl{H}}{kin}}{x_m} + G \tau}
    =\pair{\parGrad{\subTxt{\cl{H}}{kin}}{\pi}}{\tau} = \pair{\dot{q}}{\tau} = \dot{q}^\top \tau,
\end{equation} 
which follows directly from the skew-symmetry of $J_{\text{sym}}$.


The pair $(\tau,\dot{q})$ defines a power port which can change the kinetic energy of the system.
A typical manipulator's kinetic energy would change due to the power supplied by the actuators, power due to work done by wrenches acting on the end effector, 
and conservative torques due to potential energy fields such as gravity. Consequently, we have that
\begin{equation} \label{eq:manipulator_total_power}
    \pair{\tau}{\dot{q}} = \pair{\subTxt{\tau}{act}}{\dot{q}} + \pair{\subTxt{\cl{W}}{int}^e}{\twist{e}{s}{e}} - \subTxt{\dot{\cl{H}}}{pot}(q),
\end{equation}
where $\subTxt{\tau}{act} \in T_q^* Q_m$ denotes the actuator torques, $\subTxt{\cl{W}}{int}^e \in (\R{6})^*$ denotes the wrench acting on the end-effector, $\twist{e}{s}{e} \in \R{6}$ denotes the 
end-effector's body twist defined by
\begin{equation} \label{eq:end_effector_twist}
    \twist{e}{s}{e} = J_e(q) \dot{q},
\end{equation}
with $J_e(q) \equiv J_e^{e,s}(q)$ denoting the manipulator Jacobian of the end-effector,
and $\subTxt{\cl{H}}{pot}: Q_m \rightarrow \bb{R}$ denotes the potential energy function of the manipulator.

\begin{proposition}
The torque $\tau \in T_q^* Q_m$ in the port-Hamiltonian dynamics \eqref{eq:manipulator_pH_dynamics} that characterizes the power balance \eqref{eq:manipulator_total_power} is given by
\begin{equation} \label{eq:manipulator_total_torque}
    \tau = \subTxt{\tau}{act} + J_e^\top(q)\subTxt{\cl{W}}{int}^e - g(q),
\end{equation}
where $g(q) := \parGrad{\subTxt{\cl{H}}{pot}}{q}(q) \in T_q^* Q_m$ denotes the generalized gravitational torque acting on the manipulator.
Furthermore, the port-Hamiltonian dynamics \eqref{eq:manipulator_pH_dynamics} and \eqref{eq:manipulator_total_torque} is equivalent to the standard manipulator equations of motion given by
\begin{equation}\label{eq:manipulator_EL_eqns_with_torque}
    M_m(q) \ddot{q} + C_m(q,\dot{q}) \dot{q} + g(q) = \subTxt{\tau}{act} + J_e^\top(q) \subTxt{\cl{W}}{int}^e.
\end{equation}
\end{proposition}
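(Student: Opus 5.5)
The first claim follows by rewriting each term on the right-hand side of \eqref{eq:manipulator_total_power} as a pairing with $\dot{q}$ and then invoking that $\dot{q}$ is arbitrary. For the end-effector term, substitute the end-effector twist \eqref{eq:end_effector_twist} and use the definition of the transpose (dual map):
\begin{equation*}
\pair{\subTxt{\cl{W}}{int}^e}{\twist{e}{s}{e}} = \pair{\subTxt{\cl{W}}{int}^e}{J_e(q)\dot{q}} = \pair{J_e^\top(q)\subTxt{\cl{W}}{int}^e}{\dot{q}}.
\end{equation*}
For the potential term, apply the chain rule along a trajectory, $\subTxt{\dot{\cl{H}}}{pot}(q) = \pair{\parGrad{\subTxt{\cl{H}}{pot}}{q}(q)}{\dot{q}} = \pair{g(q)}{\dot{q}}$. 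Collecting terms, \eqref{eq:manipulator_total_power} becomes $\pair{\tau - \subTxt{\tau}{act} - J_e^\top(q)\subTxt{\cl{W}}{int}^e + g(q)}{\dot{q}} = 0$.

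This identity must hold for every trajectory, and the initial momentum $\pi$ (hence $\dot{q} = M_m^{-1}(q)\pi$) can be chosen freely at any configuration. The covector in the pairing therefore vanishes on all of $T_qQ_m$, which yields \eqref{eq:manipulator_total_torque}. Here I am implicitly assuming that $\tau$ depends only on $(q,\dot{q})$ and the external inputs, and not on the direction of $\dot{q}$ in some degenerate way. This is the one subtle point: power balance alone only determines $\tau$ modulo covectors annihilating $\dot{q}$. I would resolve it by requiring that the power balance hold for all velocities at a given state of the inputs, or equivalently by adopting \eqref{eq:manipulator_total_torque} as the natural choice and noting that it satisfies \eqref{eq:manipulator_total_power}.

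For the equivalence, expand \eqref{eq:manipulator_pH_dynamics}. The first block row gives $\dot{q} = \parGrad{\subTxt{\cl{H}}{kin}}{\pi} = M_m^{-1}(q)\pi$, which is the momentum definition \eqref{eq:manipulator_momentum_def}. The second block row gives
\begin{equation*}
\dot{\pi} = -\parGrad{\subTxt{\cl{H}}{kin}}{q} + \tau = C_m^\top(q,\dot{q})\dot{q} + \tau,
\end{equation*}
using the stated gradient formula. Differentiating $\pi = M_m(q)\dot{q}$ gives $\dot{\pi} = M_m\ddot{q} + \dot{M}_m\dot{q} = M_m\ddot{q} + C_m\dot{q} + C_m^\top\dot{q}$, by the identity $\dot{M}_m = C_m + C_m^\top$. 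Equating the two expressions cancels the $C_m^\top\dot{q}$ terms, leaving $M_m\ddot{q} + C_m\dot{q} = \tau$. Substituting \eqref{eq:manipulator_total_torque} then yields \eqref{eq:manipulator_EL_eqns_with_torque}.

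The converse runs the same computation backwards. Define $\pi := M_m\dot{q}$; the Euler--Lagrange equation then gives $\dot{\pi} = C_m^\top\dot{q} + \tau$, which is \eqref{eq:manipulator_pH_dynamics}. Because the Legendre map is a diffeomorphism ($M_m$ is positive definite), the two systems are equivalent.

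The only nontrivial ingredient is the gradient identity $\parGrad{\subTxt{\cl{H}}{kin}}{q} = -C_m^\top\dot{q}$ together with $\dot{M}_m = C_m + C_m^\top$, both of which were stated earlier and which I take as given.
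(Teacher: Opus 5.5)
Your proposal is correct and follows the paper's proof essentially step for step. You collapse the power balance into a single pairing with $\dot q$ and invoke arbitrariness of $\dot q$; your caveat that this only fixes $\tau$ modulo covectors annihilating $\dot q$ applies verbatim to the paper's ``holds for all $\dot q$''. You then equate $\dot\pi = -\parGrad{\subTxt{\cl{H}}{kin}}{q} + \tau$ with $\dot\pi = M_m(q)\ddot q + \dot M_m(q)\dot q$, as the paper does.

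Your sign $\dot\pi = C_m^\top(q,\dot q)\dot q + \tau$ is the correct one. The paper's intermediate display writes $-C_m^\top(q,\dot q)\dot q$, which is a slip: only the plus sign cancels against $\dot M_m = C_m + C_m^\top$ to leave the stated equation of motion.
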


\begin{proof}
The expression for the torque $\tau$ in \eqref{eq:manipulator_total_torque} follows directly from substituting \eqref{eq:end_effector_twist} into \eqref{eq:manipulator_total_power} and rearranging terms as:
\begin{align*}
    \pair{\tau}{\dot{q}} &= \pair{\subTxt{\tau}{act}}{\dot{q}} + \pair{\subTxt{\cl{W}}{int}^e}{J_e(q) \dot{q}} - \pair{\parGrad{\subTxt{\cl{H}}{pot}}{q}}{\dot{q}}, \\
    &= \pair{\subTxt{\tau}{act}}{\dot{q}} + \pair{J^{\top}_e(q) \subTxt{\cl{W}}{int}^e}{\dot{q}} - \pair{g(q)}{\dot{q}}
    = \pair{\subTxt{\tau}{act} + J^{\top}_e(q) \subTxt{\cl{W}}{int}^e - g(q)}{\dot{q}},
\end{align*}
which holds for all $\dot{q} \in T_q Q_m$.
To show the equivalence between the port-Hamiltonian dynamics \eqref{eq:manipulator_pH_dynamics} and \eqref{eq:manipulator_total_torque} and the standard manipulator equations of motion \eqref{eq:manipulator_EL_eqns_with_torque},
we substitute \eqref{eq:manipulator_total_torque} into \eqref{eq:manipulator_pH_dynamics} and use the relations for the partial gradients of the Hamiltonian $\subTxt{\cl{H}}{kin}$
with respect to $q$ and $\pi$ to obtain
\begin{equation*}
    \dot{\pi} = - C_m^\top(q,\dot{q}) \dot{q} + \subTxt{\tau}{act} + J^{\top}_e(q) \subTxt{\cl{W}}{int}^e - g(q).
\end{equation*} 
Using the relation $\dot{\pi} = M_m(q) \ddot{q} + \dot{M}_m(q) \dot{q}$ and rearranging terms then yields the standard manipulator equations of motion \eqref{eq:manipulator_EL_eqns_with_torque}.
\end{proof}

We conclude this section with a number of remarks that highlight the importance of the above constructions and their relevance to the subsequent developments in this paper.
\begin{enumerate}

    \item The \pH model provides a physically meaningful representation of the manipulator dynamics by highlighting power conjugate variables that define so called power ports. 
    Each power port consists of a pair of variables whose product has the physical dimension of mechanical power, as depicted in Fig. \ref{fig:pH_formulation_manipulator}.
    In particular, $(\parGrad{\subTxt{\cl{H}}{kin}}{x_m}, \dot{x}_m) \in T_{x_m}^* (T^* Q_m) \times T_{x_m} (T^* Q_m)$ characterizes the rate of change of the kinetic energy of the manipulator, 
    $(\parGrad{\subTxt{\cl{H}}{pot}}{q}, \dot{q}) \in T_q^* Q_m \times T_q Q_m$ characterizes the rate of change of the potential energy of the manipulator, $(\subTxt{\cl{W}}{int}^e,\twist{e}{s}{e}) \in (\R{6})^*\times \R{6}$ 
    characterizes the power supplied from the environment to the end-effector,and $(\subTxt{\tau}{act}, \dot{q}) \in T_q^* Q_m \times T_q Q_m$ characterizes the power supplied to the manipulator through the actuators.

    \item Combining \eqref{eq:manipulator_power_balance} and \eqref{eq:manipulator_total_power} then yields
    \begin{equation} \label{eq:manipulator_power_balance_total}
    \subTxt{\dot{\cl{H}}}{kin}(q,\pi) + \subTxt{\dot{\cl{H}}}{pot}(q) = \pair{\subTxt{\tau}{act}}{\dot{q}} + \pair{\subTxt{\cl{W}}{int}^e}{\twist{e}{s}{e}},
    \end{equation}
    which expresses the power balance of the manipulator system, stating that the rate of change of the total energy (kinetic + potential) of the manipulator is equal to the total power supplied to the manipulator
    through the actuators and interaction.
    In fact, \eqref{eq:manipulator_power_balance_total} also highlights the fact that the manipulator system is passive with respect to the input-output pair $(\subTxt{\tau}{act}, \dot{q})$ and $(\subTxt{\cl{W}}{int}^e, \twist{e}{s}{e})$.
    Such property is crucial for designing stable energy-based controllers for manipulators \cite{ortega2002putting,van2000l2}.

    \item The \pH model \eqref{eq:manipulator_pH_dynamics} and \eqref{eq:manipulator_total_torque} extends the standard Hamiltonian formulation 
    to include external power ports that allow for energy exchange with the environment.
    This extended power balance \eqref{eq:manipulator_power_balance_total} is characterized by the Dirac structure $\cl{D}_m(q)$ identified by
    \begin{equation*}
        \begin{pmatrix}
        \dot{x}_m \\
        \dot{q} \\
        \dot{q} \\
        \twist{e}{s}{e}
        \end{pmatrix}
        = 
        \begin{pmatrix}
        \subTxt{J}{sym} & -G    & G     & G J_e^\top(q) \\
        G^\top  & \cdot & \cdot & \cdot  \\
        G^\top  & \cdot & \cdot & \cdot  \\
        J_e(q) G^\top & \cdot & \cdot & \cdot
        \end{pmatrix}
        \begin{pmatrix}
        \parGrad{\subTxt{\cl{H}}{kin}}{x_m} \\
        \parGrad{\subTxt{\cl{H}}{pot}}{q} \\
        \subTxt{\tau}{act} \\
        \subTxt{\cl{W}}{int}^e
        \end{pmatrix}
    \end{equation*}
    where the dots represent zero blocks of appropriate dimensions, omitted for brevity.

    The composition of Dirac structures through power ports preserves the Dirac structure property \cite{van2000l2}, which is a powerful property of \pH systems that allows for 
    modular modeling of complex systems and systematic design of energy-based controllers \cite{Rashad2022,stramigioli2015energy}.

\end{enumerate}

\section{Base Port-Hamiltonian Dynamics} \label{sec:base_dyn}
Following the same line of thought in the previous section, we now derive the \pH formulation of the moving base dynamics (without a manipulator) treated as a single rigid body.
The kinetic energy of the base defines a Lagrangian given by
\begin{equation}
    \subTxt{\cl{L}}{kin}(\baseVel) = \frac{1}{2} \baseVel^\top \baseIner \baseVel,
\end{equation}
where $\map{\baseIner}{\spBaseVel}{\spBaseVel^*}$ is the projected inertia tensor of the base defined by $\baseIner:= (\cl{S}_b^{b,s})^\top \cl{I}^b_b \cl{S}_b^{b,s} $.
The above Lagrangian defines a Lagrangian system on the tangent bundle $TG_b$ of the base configuration manifold $G_b$ using the map $\chi_h$ in \eqref{eq:chi_map_def}
relating $\baseVel \in \spBaseVel$ to $(h,\dot{h})\in T G_b$.

Using the Legendre transformation, we define the Hamiltonian of the base as
\begin{equation}\label{eq:base_Hamiltonian}
    \subTxt{\cl{H}}{kin}(\baseMom) = \frac{1}{2} \baseMom^\top \baseIner^{-1} \baseMom,
\end{equation}
where the generalized momentum of the base $\baseMom \in \spBaseVel^* $ is given by 
\begin{equation}\label{eq:base_momentum_def}
    \baseMom :=\parGrad{\subTxt{\cl{L}}{kin}}{\baseVel}(\baseVel) = \baseIner \baseVel.
\end{equation}
The above Hamiltonian can also be extended to a function on the cotangent bundle $T^*G_b$ using the dual map $\map{\chi_h^*}{\spBaseVel^*}{T_h^* G_b}$.

\begin{figure}
    \centering
    \includegraphics[width=0.5\columnwidth]{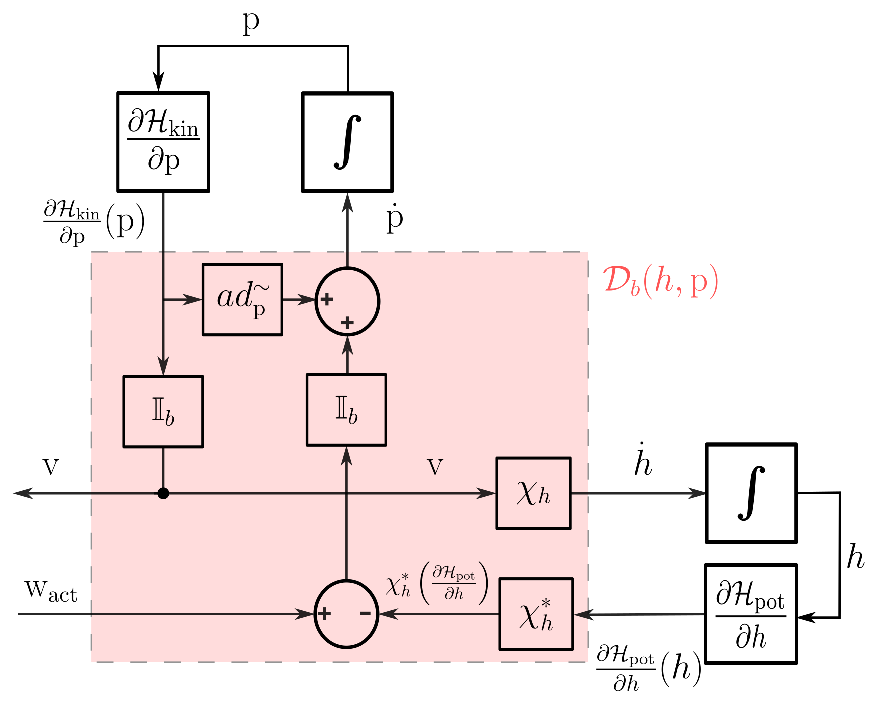}
    \caption{Port-Hamiltonian formulation of moving-base dynamics.}
    \label{fig:pH_formulation_base}
\end{figure}

Due to the invariance of the Hamiltonian \eqref{eq:base_Hamiltonian} under the left action of $G_b$ on $T^*G_b$, we can reduce the canonical 
Hamiltonian equations to the dual Lie algebra $\spBaseVel^*$.
The reduced Hamiltonian equations of motion of the base are given by the Lie-Poisson equations \cite{Marsden1993}, which can be extended using D'Alembert principle 
to include external wrenches acting on the base as \cite{rashad2021energy}:
\begin{align} 
    \dot{\baseMom} &= ad_\baseMom^\sim \parGrad{\subTxt{\cl{H}}{kin}}{\baseMom} + \baseWrench, \nonumber\\
    \baseVel &= \parGrad{\subTxt{\cl{H}}{kin}}{\baseMom}, \label{eq:base_pH}
\end{align}
where $\baseWrench \in \spBaseVel^*$ denotes the external wrench acting on the base, $ad_{\baseMom}^\sim : \spBaseVel \rightarrow \spBaseVel^*$ is the 
skew-symmetric operator defined as
\begin{equation}\label{eq:ad_sim_def}
    ad_\baseMom^\sim \bar{\baseVel} = ad_{\bar\baseVel}^\top \baseMom, \quad \forall \bar{\baseVel} \in \spBaseVel,
\end{equation}
that represents the Lie-Poisson structure on $\spBaseVel^*$, and $\parGrad{\subTxt{\cl{H}}{kin}}{\baseMom} (\baseMom) = \baseIner^{-1} \baseMom = \baseVel \in \spBaseVel$.
The power balance of the above \pH system is given by
\begin{equation}\label{eq:base_power_balance}
    \subTxt{\dot{\cl{H}}}{kin}(\baseMom) = \pair{\parGrad{\subTxt{\cl{H}}{kin}}{\baseMom}}{\dot{\baseMom}} = \pair{\baseVel}{\baseWrench},
\end{equation}
which follows from the skew-symmetry of $ad_\baseMom^\sim$.
The external wrench $\baseWrench$ can be used to model power supplied due to actuators or gravitational potential energy such that
\begin{equation}\label{eq:base_added_power}
    \pair{\baseWrench }{\baseVel}= \pair{\subTxt{\baseWrench}{act} }{\baseVel} - \subTxt{\dot{\cl{H}}}{pot}(h).
\end{equation}



\begin{proposition}
The base wrench $\baseWrench \in \spBaseVel^*$ in the port-Hamiltonian dynamics \eqref{eq:base_pH} that characterizes the power balance \eqref{eq:base_added_power} is given by
\begin{equation} \label{eq:base_total_wrench}
    \baseWrench = \subTxt{\baseWrench}{act} - \chi_h^* \left(\parGrad{\subTxt{\cl{H}}{pot}}{h}\right).
\end{equation}
\end{proposition}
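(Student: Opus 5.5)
The argument mirrors the proof of the analogous proposition for the fixed-base manipulator: we rewrite each term of the power balance \eqref{eq:base_added_power} as a pairing with the base velocity $\baseVel \in \spBaseVel$, and then use that $\baseVel$ is arbitrary. The only new ingredient is that the potential energy $\subTxt{\cl{H}}{pot}$ is a function on the Lie group $G_b$ rather than on a vector space. Its rate of change therefore has to be pulled back to the Lie algebra through the left trivialization $\chi_h$ from \eqref{eq:chi_map_def}.

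First, compute the time derivative of the potential energy along a base trajectory $h(t)$ with the chain rule. This gives
\begin{equation*}
\subTxt{\dot{\cl{H}}}{pot}(h) = \pair{\parGrad{\subTxt{\cl{H}}{pot}}{h}(h)}{\dot{h}},
\end{equation*}
where $\parGrad{\subTxt{\cl{H}}{pot}}{h}(h) = \extd \subTxt{\cl{H}}{pot}(h) \in T_h^* G_b$ and $\dot{h} \in T_h G_b$. Substituting $\dot{h} = \chi_h(\baseVel)$ and using the definition of the dual map $\map{\chi_h^*}{T_h^*G_b}{\spBaseVel^*}$, namely $\pair{\alpha}{\chi_h(\baseVel)} = \pair{\chi_h^*(\alpha)}{\baseVel}$, we obtain
\begin{equation*}
\subTxt{\dot{\cl{H}}}{pot}(h) = \pair{\chi_h^*\!\left(\parGrad{\subTxt{\cl{H}}{pot}}{h}\right)}{\baseVel}.
\end{equation*}
The direction of the dual map needs a word of care. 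The paper introduces $\chi_h^*$ with the arrow written as $\spBaseVel^* \to T_h^*G_b$. However, as the dual of $\chi_h : \spBaseVel \to T_hG_b$, it naturally acts from $T_h^*G_b$ to $\spBaseVel^*$, and that is the sense in which it appears in \eqref{eq:base_total_wrench}. I would note this explicitly.

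Next, insert this expression into \eqref{eq:base_added_power}. By linearity of the pairing in its first slot,
\begin{equation*}
\pair{\baseWrench}{\baseVel} = \pair{\subTxt{\baseWrench}{act} - \chi_h^*\!\left(\parGrad{\subTxt{\cl{H}}{pot}}{h}\right)}{\baseVel}.
\end{equation*}
We require the power balance to hold for all admissible base velocities $\baseVel \in \spBaseVel$. This is legitimate because $\chi_h$ is an isomorphism, so every tangent vector $\dot h$ arises from some $\baseVel$. Nondegeneracy of the dual pairing then gives \eqref{eq:base_total_wrench}.

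The main obstacle is conceptual rather than computational: making precise the step from the potential energy on the group to a covector on the Lie algebra. For this I would rely on the fact that, with the left trivialization, $\chi_h(\baseVel) = (L_h)_*\baseVel$, so that $\chi_h^*$ is the pullback by left translation. It can also be helpful, though it is not needed for the proof, to write the result concretely. For $G_b = SE(3)$ with gravity, $\chi_h^*(\parGrad{\subTxt{\cl{H}}{pot}}{h})$ is the gravitational wrench expressed in the body frame.
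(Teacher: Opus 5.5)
Your proposal is correct and is essentially the paper's own argument: substitute $\subTxt{\dot{\cl{H}}}{pot}(h) = \pair{\parGrad{\subTxt{\cl{H}}{pot}}{h}}{\dot{h}}$ with $\dot{h} = \chi_h(\baseVel)$ into \eqref{eq:base_added_power}, move $\chi_h$ across the pairing as its dual, and conclude because the identity holds for all $\baseVel \in \spBaseVel$. Your remark that this dual must be read as $\map{\chi_h^*}{T_h^*G_b}{\spBaseVel^*}$, opposite to the arrow written earlier in the paper, is also correct; this is the sense in which \eqref{eq:base_total_wrench} uses it.
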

\begin{proof}
The proof follows by subsituting $\subTxt{\dot{\cl{H}}}{pot}(h) = \pair{\parGrad{\subTxt{\cl{H}}{pot}}{h}}{\dot{h}}$ and  $\dot{h} = \chi_h(\baseVel)$ into \eqref{eq:base_added_power} and rearranging terms as:
\begin{equation*}
    \pair{\baseWrench}{\baseVel} = \pair{\subTxt{\baseWrench}{act}}{\baseVel} - \pair{\parGrad{\subTxt{\cl{H}}{pot}}{h}}{\dot{h}}
    = \pair{\subTxt{\baseWrench}{act}}{\baseVel} - \pair{\chi_h^* \left(\parGrad{\subTxt{\cl{H}}{pot}}{h}\right)}{\baseVel}
    = \pair{\subTxt{\baseWrench}{act} - \chi_h^* \left(\parGrad{\subTxt{\cl{H}}{pot}}{h}\right)}{\baseVel},
\end{equation*}
which holds for all $\baseVel \in \spBaseVel$.
\end{proof}

The total power balance of the \pH system \eqref{eq:base_pH} with the base wrench \eqref{eq:base_total_wrench} is given by
\begin{equation}\label{eq:base_total_power}
    \subTxt{\dot{\cl{H}}}{kin}(\baseMom) + \subTxt{\dot{\cl{H}}}{pot}(h) = \pair{\subTxt{\baseWrench}{act}}{\baseVel},
\end{equation}
stating that the rate of change of the total energy (kinetic + potential) of the base is equal to the total power supplied to it through its actuators.
By assuming the potential energy of the base is bounded from below, the above power balance implies that the base \pH system \eqref{eq:base_pH} 
is passive with respect to the input-output pair $(\subTxt{\baseWrench}{act},\baseVel)$.

The Dirac structure $\cl{D}_b(h,\baseMom)$ associated with the base \pH system \eqref{eq:base_pH} and \eqref{eq:base_total_wrench} 
characterizing the power balance \eqref{eq:base_total_power} is identified by
\begin{equation}
    \ThrVec{\dot{\baseMom}}{\dot{h}}{\baseVel} = 
    \begin{pmatrix}
        ad_{ \baseMom }^\sim & -\chi_h^*  & \bb{I}_b \\
        \chi_h & \cdot & \cdot \\
        \bb{I}_b & \cdot & \cdot
    \end{pmatrix}
    \ThrVec{\parGrad{\subTxt{\cl{H}}{kin}}{\baseMom}}{\parGrad{\subTxt{\cl{H}}{pot}}{h}}{\subTxt{\baseWrench}{act}},
\end{equation}
and graphically represented in Fig. \ref{fig:pH_formulation_base}.



\section{Vehicle-Manipulator Port-Hamiltonian Dynamics} \label{sec:fm_dyn}
With the above constructions, we can now present the main contributions of this work which is the \pH formulation of the dynamics of VMS.

\subsection{Port-Hamiltonian formulation}
Our starting point is the kinetic energy function of a VMS \eqref{eq:fm_kinetic_energy} which defines a reduced Lagrangian on $\spBaseVel^* \times T^*Q_m \cong TQ$ given by
\begin{equation}\label{eq:Lagrangian_fm}
    \subTxt{\cl{L}}{kin}(\baseVel,q,\dot{q}) = \half \TwoVec{\baseVel}{\dot{q}}^\top \cl{M}(q) \TwoVec{\baseVel}{\dot{q}},
\end{equation}
which leads to the conjugate momentum variables $\baseMom \in \spBaseVel^*$ and $\pi \in T_q^* Q_m$ defined by
\begin{align}
    \baseMom :=& \parGrad{\subTxt{\cl{L}}{kin}}{\baseVel}(\baseVel,q,\dot{q}) = M_{b}(q) \baseVel + M_{bm}(q) \dot{q}, \label{eq:fm_base_momentum_def}\\
    \pi :=& \parGrad{\subTxt{\cl{L}}{kin}}{\dot{q}}(\baseVel,q,\dot{q}) = M_{bm}^\top(q) \baseVel + M_{m}(q) \dot{q}. \label{eq:fm_manipulator_momentum_def}
\end{align}

\begin{remark}
    Note that in contrast to the manipulator momentum \eqref{eq:manipulator_momentum_def} which depends only on the joint velocities $\dot{q}$, 
the conjugate momentum variable $\pi$ for a VMS \eqref{eq:fm_manipulator_momentum_def} depends on both the base velocity $\baseVel$ and joint velocities $\dot{q}$ due to the presence of the coupling inertia matrix $M_{bm}(q)$.
The same applies to \eqref{eq:fm_base_momentum_def} in contrast to the base momentum \eqref{eq:base_momentum_def} of a moving base.
From this point onwards, we will denote by $\baseMom$ and $\pi$ the base and manipulator momenta of a VMS as defined in \eqref{eq:fm_base_momentum_def} and \eqref{eq:fm_manipulator_momentum_def}, respectively.\\
\end{remark}

The \pH dynamics of a VMS that characterize the conservation of kinetic energy in the presence of external wrenches on the base and actuator torques is given by the following theorem.\\

\begin{theorem} \label{theorem:pH_formulation}
    Let $x:=(p,q,\pi) \in \cl{X}$ denote the state of a VMS, with the state space denoted by $\cl{X}:=\spBaseVel^*\times T^*Q_m \cong \R{b+2n}$
    The equations of motion governing the state $x\in \cl{X}$ in \pH form are given by
\begin{align}
        \ThrVec{\dot{\baseMom}}{\dot{q}}{\dot{\pi}} =&
        \cl{J}(p)
        \ThrVec{\parGrad{\subTxt{\cl{H}}{kin}}{\baseMom}}{\parGrad{\subTxt{\cl{H}}{kin}}{q}}{\parGrad{\subTxt{\cl{H}}{kin}}{\pi}} 
        + \cl{G}
        \TwoVec{\baseWrench}{\tau}, \nonumber \\
        \TwoVec{\baseVel}{\dot{q}}=& 
        \cl{G}^\top
        \ThrVec{\parGrad{\subTxt{\cl{H}}{kin}}{\baseMom}}{\parGrad{\subTxt{\cl{H}}{kin}}{q}}{\parGrad{\subTxt{\cl{H}}{kin}}{\pi}} ,\label{eq:floating_base_manipulator_pH}
\end{align}
where 
\begin{equation}
    \cl{J}(\baseMom) := \begin{pmatrix}
            ad_\baseMom^\sim & \cdot & \cdot \\
            \cdot & \cdot & \bb{I}_n \\
            \cdot & -\bb{I}_n & \cdot\\
        \end{pmatrix}, \quad 
        \cl{G} := \begin{pmatrix}
            \bb{I}_b & \cdot \\
            \cdot & \cdot  \\
            \cdot &  \bb{I}_n
        \end{pmatrix}  ,
\end{equation}
denote the interconnection and input matrices, respectively, $\baseWrench \in \spBaseVel^*$ denotes the resultant wrench \eqref{eq:base_total_wrench}
acting on the base, and $\tau \in \bb{R}^n$ denotes the resultant torques \eqref{eq:manipulator_total_torque} applied at the manipulator's joints.

The Hamiltonian function $\map{\subTxt{\cl{H}}{kin}}{\cl{X}}{\bb{R}}$ is given by the Legendre transform of the reduced kinetic energy Lagrangian \eqref{eq:Lagrangian_fm} as
\begin{equation}\label{eq:Hamiltonian_fm}
    \subTxt{\cl{H}}{kin}(\baseMom,q,\pi) = \half \TwoVec{\baseMom}{\pi}^\top \cl{M}^{-1}(q) \TwoVec{\baseMom}{\pi}.
\end{equation}
\end{theorem}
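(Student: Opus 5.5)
The plan is to derive the equations by Hamiltonian reduction of the canonical Hamiltonian system on $T^*Q = T^*G_b \times T^*Q_m$, and then to add the generalized forces $(\baseWrench,\tau)$ via the Lagrange--d'Alembert principle. The structure mirrors the two cases already treated: the manipulator case \eqref{eq:manipulator_pH_dynamics} and the base case \eqref{eq:base_pH}.

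First, I would set up the unreduced system. The Lagrangian \eqref{eq:Lagrangian_fm}, written on $TQ$ through $\dot h=\chi_h(\baseVel)$, is invariant under the left action of $G_b$ on the first factor, because $\cl{M}(q)$ depends only on $q$ (the inertias $\cl{I}_i^i$ and $\text{Ad}_{H_b^i(q)}$ involve only the manipulator configuration). Applying the Legendre transform with the momenta \eqref{eq:fm_base_momentum_def}--\eqref{eq:fm_manipulator_momentum_def} gives $\subTxt{\cl{H}}{kin}$ in \eqref{eq:Hamiltonian_fm}. This uses the fact that $\cl{M}(q)$ is symmetric positive definite, so the map $(\baseVel,\dot q)\mapsto(\baseMom,\pi)$ is invertible and $(\baseVel,\dot q)=\cl{M}^{-1}(q)(\baseMom,\pi)$. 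Differentiating \eqref{eq:Hamiltonian_fm} then gives $\parGrad{\subTxt{\cl{H}}{kin}}{\baseMom}=\baseVel$ and $\parGrad{\subTxt{\cl{H}}{kin}}{\pi}=\dot q$. This already yields the output equation, since $\cl{G}^\top$ selects exactly these two gradients, and it also yields the middle row $\dot q=\parGrad{\subTxt{\cl{H}}{kin}}{\pi}$.

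Next, I would perform the reduction. Left-trivializing $T^*G_b\cong G_b\times\spBaseVel^*$ via $\chi_h^*$, the canonical Poisson bracket on $T^*G_b\times T^*Q_m$, restricted to $G_b$-invariant functions, descends to $\spBaseVel^*\times T^*Q_m$. The result is the product of the Lie--Poisson bracket on $\spBaseVel^*$ and the canonical bracket on $T^*Q_m$. This is standard Lie--Poisson reduction for $T^*(G\times M)/G\cong\mathfrak g^*\times T^*M$. The Poisson tensor of this bracket is the block-diagonal $\cl{J}(\baseMom)$, which combines $ad^\sim_\baseMom$ as in \eqref{eq:base_pH} with $\subTxt{J}{sym}$ as in \eqref{eq:manipulator_pH_dynamics}. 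There is no coupling block, because the bracket is a direct product and all coupling sits in the Hamiltonian through $\cl{M}^{-1}(q)$. Together with the previous step, this produces the drift term.

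For the forcing, I would take the external generalized forces as a fiber-preserving force field $F\in T^*Q$. In left-trivialized form its components are $(\baseWrench,\tau)$, with $\baseWrench$ as in \eqref{eq:base_total_wrench} and $\tau$ as in \eqref{eq:manipulator_total_torque}, extended to include the base-motion contribution. The Lagrange--d'Alembert principle adds this force to the momentum equations only, which gives the term $\cl{G}(\baseWrench,\tau)$.

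Finally, as a consistency check and to identify the ports, I would verify the power balance $\subTxt{\dot{\cl{H}}}{kin}=\pair{\baseWrench}{\baseVel}+\pair{\tau}{\dot q}$ using the skew-symmetry of $\cl{J}$. I expect the main obstacle to be justifying cleanly that the reduced bracket has no cross terms and has the sign conventions of $ad^\sim_\baseMom$ from \eqref{eq:ad_sim_def}. The first thing I would try is a direct Euler--Poincaré computation: vary the Lagrangian with constrained variations $\delta\baseVel=\dot\eta+ad_{\baseVel}\eta$ and free $\delta q$. This gives $\dot\baseMom=ad_\baseVel^\top\baseMom+\baseWrench$ and $\dot\pi=\parGrad{\subTxt{\cl{L}}{kin}}{q}+\tau$. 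The identity $\parGrad{\subTxt{\cl{L}}{kin}}{q}=-\parGrad{\subTxt{\cl{H}}{kin}}{q}$ under the Legendre transform, together with $ad^\top_\baseVel\baseMom=ad^\sim_\baseMom\baseVel$, then converts these into \eqref{eq:floating_base_manipulator_pH}.
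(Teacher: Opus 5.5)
Your proposal is correct. Its computational core, the Euler--Poincar\'e computation you list as your fallback, is exactly the paper's proof in Appendix~\ref{sec:proof_of_pH_theorem}. There, the reduced variational principle of Appendix~\ref{sec:reduced_variational_Theorem} is applied to \eqref{eq:Lagrangian_fm} with $\delta\baseVel=\dot\eta+ad_{\baseVel}\eta$ and free $\delta q$. Integration by parts then gives $\dot\baseMom=ad^\sim_{\baseMom}\baseVel$ and $\dot\pi=\parGrad{\subTxt{\cl{L}}{kin}}{q}=-\parGrad{\subTxt{\cl{H}}{kin}}{q}$, and $(\baseWrench,\tau)$ are appended by Lagrange--d'Alembert.

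What you add on top is a genuinely Hamiltonian-side justification: you obtain $\cl{J}(\baseMom)$ directly as the Poisson tensor of the reduced bracket on $T^*(G_b\times Q_m)/G_b\cong\spBaseVel^*\times T^*Q_m$, i.e.\ Lie--Poisson on $\spBaseVel^*$ times canonical on $T^*Q_m$. This explains structurally why there are no cross blocks. The left trivialization corresponds to the flat product connection, so no curvature term arises. Terms built from $\parGrad{A}{q}$ and $ad^\sim_{\baseMom}A$ appear only after passing to the mechanical connection, namely in $\hat{\cl{J}}$ through $B$ in \eqref{eq:B_definition}.

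That route has costs. It uses Lie--Poisson (cotangent-bundle) reduction as a black box, and the sign of the left-reduced bracket has to be fixed separately, which is the obstacle you yourself flagged. The variational route fixes that sign automatically through the constrained variation $\delta\baseVel$. It also accommodates the symmetry-breaking, $h$-dependent forces (e.g.\ gravity entering through $\chi_h^*$) within the same principle. In the Poisson route these forces do not descend to the quotient and must be appended as external inputs anyway.
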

\begin{proof}
See Appendix \ref{sec:proof_of_pH_theorem}.
\end{proof}

\begin{corollary}\label{corollary:pH_formulation}
    The Hamiltonian \eqref{eq:Hamiltonian_fm} satisfies along trajectories $(\baseMom(t), q(t), \pi(t)) \in \cl{X}$ of the \pH dynamics \eqref{eq:floating_base_manipulator_pH} the power balance
\begin{equation}\label{eq:fm_power_balance} 
    \dot{\cl{H}}_{kin} = \pair{\baseWrench}{\baseVel} + \pair{\tau}{\dot{q}},
\end{equation}
stating that the rate of change of the total kinetic energy of the VMS is equal to the total power supplied to it through the base wrench and joint torques.
\end{corollary}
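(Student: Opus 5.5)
The plan is to repeat the computation behind \eqref{eq:manipulator_power_balance} and \eqref{eq:base_power_balance}. Collect the state as $x=(\baseMom,q,\pi)\in\cl{X}$ and write $\parGrad{\subTxt{\cl{H}}{kin}}{x}$ for the stacked partial gradients appearing in \eqref{eq:floating_base_manipulator_pH}. Since $\subTxt{\cl{H}}{kin}$ in \eqref{eq:Hamiltonian_fm} has no explicit time dependence, the chain rule gives
\begin{equation*}
\subTxt{\dot{\cl{H}}}{kin} = \pair{\parGrad{\subTxt{\cl{H}}{kin}}{x}}{\dot{x}}
= \pair{\parGrad{\subTxt{\cl{H}}{kin}}{x}}{\cl{J}(\baseMom)\parGrad{\subTxt{\cl{H}}{kin}}{x}} + \pair{\parGrad{\subTxt{\cl{H}}{kin}}{x}}{\cl{G}\TwoVec{\baseWrench}{\tau}}.
\end{equation*}
Here I substitute the state equation of Theorem~\ref{theorem:pH_formulation} along trajectories.

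The first term vanishes by skew-symmetry of $\cl{J}(\baseMom)$, which I would check blockwise. The lower $2n\times 2n$ block is $\subTxt{J}{sym}$, which is skew. The upper block contributes $\pair{\baseVel'}{ad^\sim_\baseMom \baseVel'}$ with $\baseVel' = \parGrad{\subTxt{\cl{H}}{kin}}{\baseMom}$. By \eqref{eq:ad_sim_def} this equals $\pair{ad_{\baseVel'}^\top\baseMom}{\baseVel'} = \pair{\baseMom}{ad_{\baseVel'}\baseVel'}$, and this is zero because $ad_{\baseVel'}\baseVel' = [\baseVel',\baseVel'] = 0$.

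The only point needing care is that the pairing of a gradient in $\spBaseVel$ with an element of $\spBaseVel^*$ is independent of the $(q,\pi)$ blocks. This holds because $\cl{J}$ is block-diagonal with respect to the splitting $\spBaseVel^*\times T^*Q_m$, so no cross terms arise.

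For the second term, the transpose $\cl{G}^\top$ gives
\begin{equation*}
\pair{\parGrad{\subTxt{\cl{H}}{kin}}{x}}{\cl{G}\TwoVec{\baseWrench}{\tau}} = \pair{\TwoVec{\baseWrench}{\tau}}{\cl{G}^\top\parGrad{\subTxt{\cl{H}}{kin}}{x}}.
\end{equation*}
By the output equation in \eqref{eq:floating_base_manipulator_pH}, $\cl{G}^\top\parGrad{\subTxt{\cl{H}}{kin}}{x} = (\baseVel,\dot{q})$. The second term is therefore $\pair{\baseWrench}{\baseVel}+\pair{\tau}{\dot{q}}$, which is \eqref{eq:fm_power_balance}.

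I expect the main obstacle to be only the justification that the outputs $\parGrad{\subTxt{\cl{H}}{kin}}{\baseMom}$ and $\parGrad{\subTxt{\cl{H}}{kin}}{\pi}$ really coincide with the physical velocities $\baseVel$ and $\dot{q}$. This is asserted by Theorem~\ref{theorem:pH_formulation}, and I would confirm it directly from \eqref{eq:Hamiltonian_fm}: the momentum-gradient of $\half m^\top\cl{M}^{-1}m$ is $\cl{M}^{-1}(q)(\baseMom,\pi)$, which equals $(\baseVel,\dot{q})$ by \eqref{eq:fm_base_momentum_def}--\eqref{eq:fm_manipulator_momentum_def}.
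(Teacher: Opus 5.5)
Your proposal is correct and follows essentially the same route as the paper's proof: the chain rule along trajectories, skew-symmetry of $\cl{J}(\baseMom)$ to eliminate the internal term, and then transposing $\cl{G}$ and using the output equation to get $\pair{\baseWrench}{\baseVel}+\pair{\tau}{\dot{q}}$. Your explicit check that $\pair{\baseMom}{ad_{\baseVel'}\baseVel'}=0$ fills in the skew-symmetry of $ad^\sim_\baseMom$, which the paper asserts without proof.
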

\begin{proof}
See Appendix \ref{sec:proof_of_pH_theorem}.
\end{proof}

\begin{table}
    \centering
    \begin{tabular}{|c|c|c|}
        \hline
        \textbf{Map/Variable} & \textbf{Coordinate-free} & \textbf{Coordinate-based} \\
        \hline
        $(\dot{\baseMom}, \parGrad{\subTxt{\cl{H}}{kin}}{\baseMom})$ & $\spBaseVel^*\times \spBaseVel$ & $\R{b} \times \R{b}$ \\
        $(\dot{q}, \parGrad{\subTxt{\cl{H}}{kin}}{q})$ & $T_q Q_m \times T_q^* Q_m$ & $\R{n} \times \R{n}$ \\
        $(\dot{\pi}, \parGrad{\subTxt{\cl{H}}{kin}}{\pi})$ & $T_q^* Q_m \times T_q Q_m$ & $\R{n} \times \R{n}$ \\
        $ad_\baseMom^\sim$ & $\spBaseVel\rightarrow\spBaseVel^*$ & $\R{b\times b}$ \\
        $\cl{J}(\baseMom)$ & $T_x^*\cl{X} \rightarrow T_x\cl{X}$ & $\R{(b+2n)\times(b+2n)}$ \\
        $\cl{G}$ & $\spBaseVel^*\times T_q^* Q_m \rightarrow T_x\cl{X}$ & $\R{(b+2n)\times(b+n)}$ \\
        \hline
    \end{tabular}
    \caption{Coordinate-free and coordinate-based representations of maps and variables in \eqref{eq:floating_base_manipulator_pH}}
    \label{table:matrix_representations_pH_VMS}
\end{table}

The interconnection and input matrices of the \pH dynamics \eqref{eq:floating_base_manipulator_pH}, derived from first principles using reduction, can be seen to be a 
direct combination of those of the base \pH dynamics \eqref{eq:base_pH} and manipulator \pH dynamics \eqref{eq:manipulator_pH_dynamics}. This is a direct consequence of the fact
that the Dirac structure of the coupled system is the direct product of the Dirac structures of the individual subsystems and is independent from the Hamiltonian function.
The only source of coupling between the base and manipulator dynamics in \eqref{eq:floating_base_manipulator_pH} is through the configuration-dependent inertia matrix $\cl{M}(q)$ defining
the Hamiltonian \eqref{eq:Hamiltonian_fm}. 

The \pH dynamics \eqref{eq:floating_base_manipulator_pH} is useful for analysis purposes (e.g., to show passivity) but not suitable for control design or numerical implementation 
due to the coupling between the base and manipulator momenta. 
This source of difficulty is mainly reflected in the Hamiltonian gradients appearing in \eqref{eq:floating_base_manipulator_pH} which can be expressed in terms of the velocities $\baseVel$ and $\dot{q}$ as
\begin{equation*}
    \parGrad{\subTxt{\cl{H}}{kin}}{\baseMom}(\baseVel,q,\dot{q}) = \baseVel, \qquad   
    \parGrad{\subTxt{\cl{H}}{kin}}{\pi}(\baseVel,q,\dot{q}) = \dot{q}, \qquad  
    \parGrad{\subTxt{\cl{H}}{kin}}{q}(\baseVel,q,\dot{q}) = -\half \parGrad{}{q} \left[\baseVel^\top M_b(q) \baseVel + 2 \baseVel^\top M_{bm}(q) \dot{q} + \dot{q}^\top M_m(q) \dot{q} \right].
\end{equation*}
Writing the gradients in terms of the momenta $\baseMom$ and $\pi$ requires the computation of the inverse of the inertia matrix $\cl{M}^{-1}(q)$ 
with respect to $q$ which is cumbersome and computationally expensive due to the coupling inertia matrix $M_{bm}(q)$.

In principle, one should be able to show that the \pH dynamics \eqref{eq:floating_base_manipulator_pH} is equivalent to the standard equations of motion of VMS given in \eqref{eq:floating_base_Lagrangian_dynamics}
by performing the inverse Legendre transform and substituting the definitions of the momenta. However, this process is quite tedious again due to $M_{bm}(q)$ and we shall not pursue it here.
Instead, we will present in the coming subsection an alternative \pH formulation that is more suitable for control design and simulation and 
show its equivalence to the reduced Euler-Lagrange equations and other formulations in the literature.

\subsection{Inertially-decoupled port-Hamiltonian formulation}
The configuration space $Q=G_b\times Q_m$ of a VMS has the special structure of a trivial principal bundle with base manifold $Q_m$ and 
structure group $G_b$, as depicted in Fig. \ref{fig:principal_bundle_structure}.
This special structure allows us to define a new set of momentum variables that decouple the base and manipulator dynamics inertially (i.e., at the acceleration level).
Specifially, we can decompose a tangent vector $(\dot{h},\dot{q}) \in T_{(h,q)}Q$, or equivalently $(\baseVel,\dot{q})\in \spBaseVel\times T_qQ_m$, into vertical motions
along the fibers (locked manipulator with $\dot{q}=0$) and 
horizontal motions defined such that
\begin{equation}
    \cl{A}_b(h,\dot{h},q,\dot{q}) = 0,
\end{equation} 
where $\map{\cl{A}_b}{TQ}{\spBaseVel}$ is the natural mechanical connection on the principal bundle $Q$ defined by \cite{Marsden1993}
\begin{equation}
    \cl{A}_b(h,\dot{h},q,\dot{q}) := \baseVel + A(q) \dot{q},
\end{equation}
where $\baseVel$ is related to $(h,\dot{h})$ via $\chi_h$ and $\map{A(q)}{T_q Q_m}{\spBaseVel}$ is the local connection form defined by
\begin{equation}
    A(q) := M_b^{-1}(q) M_{bm}(q).
\end{equation}

\begin{figure}
    \centering
    \includegraphics[width=0.5\columnwidth]{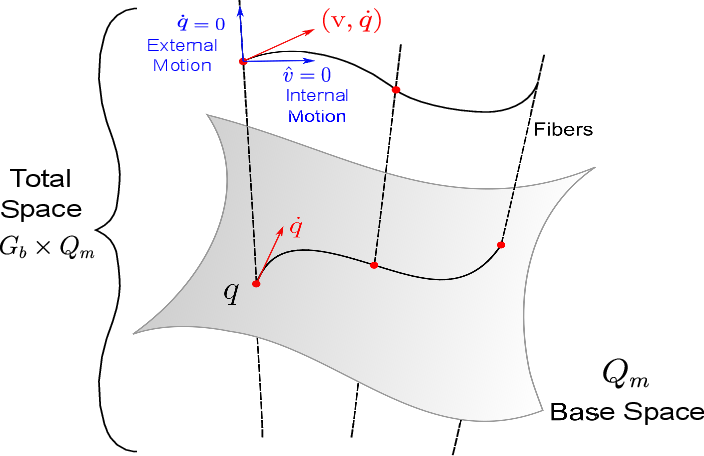}
    \caption{Principal bundle structure of the VMS configuration space $Q = G_b \times Q_m$.}
    \label{fig:principal_bundle_structure}
\end{figure}

Thus, the principal bundle structure of $Q$ provides a natural way to decompose the motion of a VMS into internal motions, defined by the manipulator joints $\dot{q} \in T_qQ_m$, 
and external motions, defined by the so called locked velocity (twist) of the base given by
\begin{equation}
    \hat\baseVel := \baseVel + A(q) \dot{q} \in \spBaseVel.
\end{equation}
It is straightforward to see that the Lagrangian \eqref{eq:Lagrangian_fm} can be rewritten in terms of $(\hat\baseVel, q, \dot{q})\in \spBaseVel\times TQ_m$ as \cite{Moghaddam2024}
\begin{equation}\label{eq:decoupled_Lagrangian_fm}
    \subTxt{\hat{\cl{L}}}{kin}(\hat\baseVel,q,\dot{q}) = \half \hat\baseVel^\top M_b(q) \hat\baseVel + \half \dot{q}^\top \hat{M}_m(q) \dot{q},
\end{equation}
where $\hat{M}_m(q) := M_m(q) - A^\top(q) M_{bm}(q)$ denotes the Schur complement of $M_b(q)$ in $\cl{M}(q)$.

In what follows, we shall change the coordinates of the \pH dynamics to be represented in the conjugate momenta associated with the Lagrangian \eqref{eq:decoupled_Lagrangian_fm} with respect to
the variables $(\hat\baseVel, q, \dot{q})$.
For that, we shall need the following two lemmas.\\

\begin{lemma}\label{lemma:phi_map}
Let $\map{\phi(q)}{\cl{X}}{\cl{X}}$ denote the diffeomorphism mapping $x:=(\baseMom,q,\pi)$ to $\hat{x}:=(\hat\baseMom,\hat q,\hat{\pi})$ defined by
\begin{equation}\label{eq:diffeomorphism_phi}
    \hat\baseMom := \baseMom, \qquad \hat q = q, \qquad \hat{\pi} := - A^\top(q) \baseMom + \pi,
\end{equation}
where $\hat \baseMom := \parGrad{\subTxt{\hat{\cl{L}}}{kin}}{\hat \baseVel} \in \spBaseVel^*$ and $\hat\pi := \parGrad{\subTxt{\hat{\cl{L}}}{kin}}{\dot{\hat q}} \in T_q^* Q_m$ are the conjugate momenta associated 
with the Lagrangian \eqref{eq:decoupled_Lagrangian_fm} with respect to the variables $\hat\baseVel$ and $ \dot{\hat q}$, respectively.
\end{lemma}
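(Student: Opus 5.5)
We will prove the lemma by computing the conjugate momenta of the decoupled Lagrangian \eqref{eq:decoupled_Lagrangian_fm} directly. The point to keep in mind is that $\hat\pi$ is the partial derivative with respect to $\dot q$ taken with $\hat\baseVel$ held fixed, which differs from holding $\baseVel$ fixed. After obtaining $\hat\baseMom$ and $\hat\pi$ in terms of $(\baseVel,q,\dot q)$, we will re-express them through $(\baseMom,q,\pi)$ using \eqref{eq:fm_base_momentum_def}--\eqref{eq:fm_manipulator_momentum_def}. The last step is to check that the resulting map is a diffeomorphism of $\cl{X}$.

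First, differentiating \eqref{eq:decoupled_Lagrangian_fm} with respect to $\hat\baseVel$ gives
\begin{equation*}
\hat\baseMom = M_b(q)\hat\baseVel = M_b(q)\baseVel + M_b(q)A(q)\dot q = M_b(q)\baseVel + M_{bm}(q)\dot q = \baseMom,
\end{equation*}
where we used $A(q)=M_b^{-1}(q)M_{bm}(q)$. Second, differentiating with respect to $\dot q$ at fixed $(\hat\baseVel,q)$ gives $\hat\pi = \hat M_m(q)\dot q = M_m(q)\dot q - A^\top(q)M_{bm}(q)\dot q$.

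Next we compute $-A^\top\baseMom + \pi$. Substituting the momentum definitions,
\begin{equation*}
-A^\top\baseMom+\pi = -A^\top M_b\baseVel - A^\top M_{bm}\dot q + M_{bm}^\top\baseVel + M_m\dot q.
\end{equation*}
Since $M_b$ is symmetric, $A^\top M_b = M_{bm}^\top M_b^{-1}M_b = M_{bm}^\top$. The two $\baseVel$ terms therefore cancel, leaving $(M_m - A^\top M_{bm})\dot q = \hat M_m\dot q = \hat\pi$. This establishes \eqref{eq:diffeomorphism_phi}.

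For consistency we also confirm that the decoupled Lagrangian equals $\subTxt{\cl{L}}{kin}$. Expanding $\half(\baseVel+A\dot q)^\top M_b(\baseVel+A\dot q)+\half\dot q^\top\hat M_m\dot q$ and using $M_bA=M_{bm}$ reproduces \eqref{eq:Lagrangian_fm}. This identity is what justifies calling $\hat\baseMom$ and $\hat\pi$ the conjugate momenta of the same physical system.

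Finally, $\phi$ is smooth because $A(q)$ is smooth: $M_b(q)$ is positive definite as the locked inertia of a system with positive definite link inertias, so it is invertible. The inverse map $\baseMom=\hat\baseMom$, $q=\hat q$, $\pi=\hat\pi + A^\top(\hat q)\hat\baseMom$ is also explicit and smooth, so $\phi$ is a diffeomorphism; in fact it is a fiberwise linear bundle map over $Q_m$.

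The main obstacle is notational rather than computational. One must be careful to state that $\partial/\partial\dot q$ in $\hat\pi$ is taken in the $(\hat\baseVel,q,\dot q)$ chart. One should also note that the phrase ``$\phi(q)$'' really denotes a map depending on the $q$-component of its argument.
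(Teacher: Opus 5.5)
Your proof is correct and follows the paper's argument. Both compute $\hat\baseMom = M_b\hat\baseVel = \baseMom$ via $M_bA = M_{bm}$, and both compute $\hat\pi = \hat{M}_m\dot q$. The only cosmetic difference is direction: the paper rewrites $A^\top M_{bm}\dot q = A^\top(\baseMom - M_b\baseVel)$ to reach $\pi - A^\top\baseMom$, whereas you expand $\pi - A^\top\baseMom$ and use $A^\top M_b = M_{bm}^\top$ to reach $\hat{M}_m\dot q$. Your extra checks, that $\subTxt{\hat{\cl{L}}}{kin} = \subTxt{\cl{L}}{kin}$ and that $\phi$ has the explicit smooth inverse $\pi = \hat\pi + A^\top(\hat q)\hat\baseMom$, are taken for granted in the paper and only make the argument more complete.
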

\begin{proof}
The proof follows directly by substituting the definitions of $\baseMom$ and $\pi$ in terms of $(\baseVel,q,\dot{q})$ into the definitions of $\hat\baseMom$ and $\hat{\pi}$.
For brevity, we shall drop the dependence of the maps on $q$ in the rest of the proof.
We have
\begin{align*}
    \hat\baseMom = \parGrad{\subTxt{\hat{\cl{L}}}{kin}}{\hat \baseVel} = M_b \hat\baseVel = M_b (\baseVel + A \dot{q}) = M_b \baseVel + M_b M_b^{-1} M_{bm} \dot{q} = \baseMom,
\end{align*}
\begin{align*}
    \hat{\pi} = \parGrad{\subTxt{\hat{\cl{L}}}{kin}}{\dot{q}} = \hat{M}_m \dot{q} = M_m \dot{q} - M_{bm}^\top M_b^{-1} M_{bm} \dot{q} 
    =& M_m \dot{q} - A^\top M_{bm} \dot{q}  = M_m \dot{q} - A^\top ( \baseMom - M_b \baseVel) \\
    =&  M_m \dot{q} + M_{bm}^\top \baseVel - A^\top \baseMom = \pi - A^\top \baseMom,
\end{align*}
which proves the lemma.
\end{proof}

\begin{table}
    \centering
    \begin{tabular}{|c|c|c|}
        \hline
        \textbf{Map} & \textbf{Coordinate-free} & \textbf{Coordinate-based} \\
        \hline
        $A(q)$ & ${T_{q} Q_m}\rightarrow{\spBaseVel}$ & $\R{b\times n}$ \\
        $\Phi(\baseMom,q)$ & $T_x\cl{X} \rightarrow T_{\hat{x}}\cl{X}$ & $\R{(b+2n)\times(b+2n)}$ \\
        $L(\baseMom,q)$ & $T_q Q_m \rightarrow T_q^*Q_m$ & $\R{n\times n}$ \\
        $N_b(\hat{\baseMom},\hat{q})$ & $T_{\hat{q}} Q_m \rightarrow \spBaseVel$ & $\R{b\times n}$ \\
        $\hat{N}_m(\hat{q},\hat{\pi})$ & $T_{\hat{q}} Q_m \rightarrow T_{\hat{q}}^* Q_m$ & $\R{n\times n}$ \\
        $B(\hat\baseMom,\hat{q})$ & ${T_{\hat{q}} Q_m}\rightarrow{T_{\hat{q}}^* Q_m}$ & $\R{n\times n}$ \\
        $\hat{\cl{J}}(\hat\baseMom,\hat{q})$ & $T_x^*\cl{X} \rightarrow T_x\cl{X}$ & $\R{(b+2n)\times(b+2n)}$ \\
        $\hat{\cl{G}}(\hat{q})$ & $\spBaseVel^*\times T_q^* Q_m \rightarrow T_x\cl{X}$ & $\R{(b+2n)\times(b+n)}$ \\
        \hline
    \end{tabular}
    \caption{Coordinate-free and coordinate-based representations of maps in \eqref{eq:decoupled_floating_base_manipulator_pH}}
    \label{table:matrix_representations_decoupled_pH_VMS}
\end{table}


\begin{lemma} \label{lemma:tangent_phi}
    The tangent map of the diffeomorphism $\phi(q)$ at a point $x$, denoted by $\Phi(\baseMom,q) := T_{x} \phi(q)$, is given by the map $\map{\Phi(\baseMom,q)}{T_x \cl{X}}{T_{\hat{x}}\cl{X}}$ expressed as
    \begin{equation}
        \Phi(\baseMom,q) = \begin{pmatrix}
            \bb{I}_b & \cdot & \cdot \\
            \cdot & \bb{I}_n & \cdot \\
            - A^\top(q)& - L(\baseMom,q) & \bb{I}_n
        \end{pmatrix},
    \end{equation}
    where $\map{L(\baseMom,q)}{T_q Q_m}{T_q^*Q_m}$ is defined by
    \begin{equation}\label{eq:L_map_def}
        L(\baseMom,q) := \begin{bmatrix}
             \parGrad{A^\top}{q_1 }(q) \baseMom,  \cdots,
             \parGrad{A^\top}{q_2}(q) \baseMom
        \end{bmatrix}.
    \end{equation}
\end{lemma}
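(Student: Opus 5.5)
The plan is to compute the Jacobian of the map $\phi$ from Lemma~\ref{lemma:phi_map} directly, in the coordinates $x=(\baseMom,q,\pi)$ on $\cl{X}\cong\R{b+2n}$. We order the blocks as $(\baseMom,q,\pi)$ in both domain and codomain. Then we differentiate each component of \eqref{eq:diffeomorphism_phi} with respect to each block of $x$. The equivalent route is to take a curve $x(t)=(\baseMom(t),q(t),\pi(t))$ with velocity $\dot x=(\dot\baseMom,\dot q,\dot\pi)\in T_x\cl{X}$ and compute $\frac{d}{dt}\phi(x(t))$ by the chain rule. The map $\Phi(\baseMom,q)$ is then read off as the linear operator sending $\dot x$ to $\dot{\hat x}$.

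The first two rows are immediate. Since $\hat\baseMom=\baseMom$ and $\hat q=q$, we get $\dot{\hat\baseMom}=\dot\baseMom$ and $\dot{\hat q}=\dot q$, giving the block rows $(\bb{I}_b,\cdot,\cdot)$ and $(\cdot,\bb{I}_n,\cdot)$.

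For the third row we differentiate $\hat\pi=-A^\top(q)\baseMom+\pi$:
\begin{equation*}
    \dot{\hat\pi} = -A^\top(q)\dot\baseMom - \Big(\sum_{k=1}^n \parGrad{A^\top}{q_k}(q)\,\dot q_k\Big)\baseMom + \dot\pi .
\end{equation*}
The first and last terms supply the blocks $-A^\top(q)$ and $\bb{I}_n$. The middle term is bilinear in $(\dot q,\baseMom)$ and must be rewritten as a linear map acting on $\dot q$. We do this by exchanging the order of application:
\begin{equation*}
    \Big(\sum_k \parGrad{A^\top}{q_k}\dot q_k\Big)\baseMom=\sum_k \Big(\parGrad{A^\top}{q_k}\baseMom\Big)\dot q_k .
\end{equation*}
Each $\parGrad{A^\top}{q_k}(q)\baseMom$ is a vector in $T_q^*Q_m\cong\R{n}$. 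Stacking these vectors as columns gives exactly the $n\times n$ matrix $L(\baseMom,q)$ of \eqref{eq:L_map_def}, whose $k$-th column is $\parGrad{A^\top}{q_k}(q)\baseMom$. Note that the last column in the statement should read $q_n$. This produces the block $-L(\baseMom,q)$ and completes the matrix.

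The only real obstacle is making the $q$-derivative meaningful when $Q_m$ is a product of Lie groups rather than $\R{n}$. The joints considered are 1-DoF revolute or prismatic, so $Q_m$ is locally parametrized by the joint coordinates $q_k$, and $\dot q$ is identified with $\R{n}$ as in Table~\ref{table:matrix_representations}. In that chart $\partial/\partial q_k$ is the ordinary partial derivative of the matrix-valued function $A^\top(q)=M_{bm}^\top(q)M_b^{-1}(q)$. This function is smooth because $M_b(q)$ is positive definite, being a projected inertia. We will state this identification explicitly. Finally, $\Phi$ is block lower triangular with identity diagonal, so it is invertible, which is consistent with $\phi$ being a diffeomorphism.
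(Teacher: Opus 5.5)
Your proposal is correct and takes essentially the same route as the paper. The first two block rows come from $\hat\baseMom=\baseMom$ and $\hat q=q$, and the third from differentiating $\hat\pi=-A^\top(q)\baseMom+\pi$ and rewriting $\big(\sum_k \parGrad{A^\top}{q_k}\,\delta q_k\big)\baseMom$ as $L(\baseMom,q)\,\delta q$; the paper works with arbitrary variations $(\delta\baseMom,\delta q,\delta\pi)$ where you use curve velocities, which makes no difference, and you are right that the last column of $L$ should involve $\parGrad{A^\top}{q_n}$ rather than $\parGrad{A^\top}{q_2}$.
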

\begin{proof}
Let $(\delta \baseMom, \delta q, \delta \pi) \in T_x \cl{X}$ be arbitrary variations at $x=(\baseMom,q,\pi) \in \cl{X}$.
The tangent map $\Phi(q)$ is defined by the relation
\begin{equation*}
    \ThrVec{\delta \hat\baseMom}{\delta \hat{q}}{\delta \hat{\pi}} = \Phi(\baseMom,q) \ThrVec{\delta \baseMom}{\delta q}{\delta \pi},
\end{equation*}
where $(\delta \hat\baseMom, \delta \hat{q}, \delta \hat{\pi}) \in T_{\hat{x}} \cl{X}$ are the variations induced by $(\delta \baseMom, \delta q, \delta \pi)$ through the diffeomorphism $\phi(q)$.
The first two rows of the tangent map $\Phi(\baseMom,q)$ follow directly from the first two equalities in \eqref{eq:diffeomorphism_phi}.
To derive the third row, we compute the differential of the third equality in \eqref{eq:diffeomorphism_phi} as follows:
\begin{equation*}
    \delta\hat{\pi} = \delta\pi - \delta(A^\top(q) \baseMom)=\delta\pi - A^\top(q) \delta\baseMom - \delta A^\top(q) \baseMom,
\end{equation*}
Using $\delta A^\top(q) = \sum_{i=1}^{n} \parGrad{A^\top}{q_i}(q) \delta q_i$, we can write
\begin{equation*}
    \delta A^\top(q) \baseMom =  L(\baseMom,q) \delta q,
\end{equation*}
which proves the lemma.
\end{proof}

\begin{figure*}
    \centering
    \includegraphics[width=\textwidth]{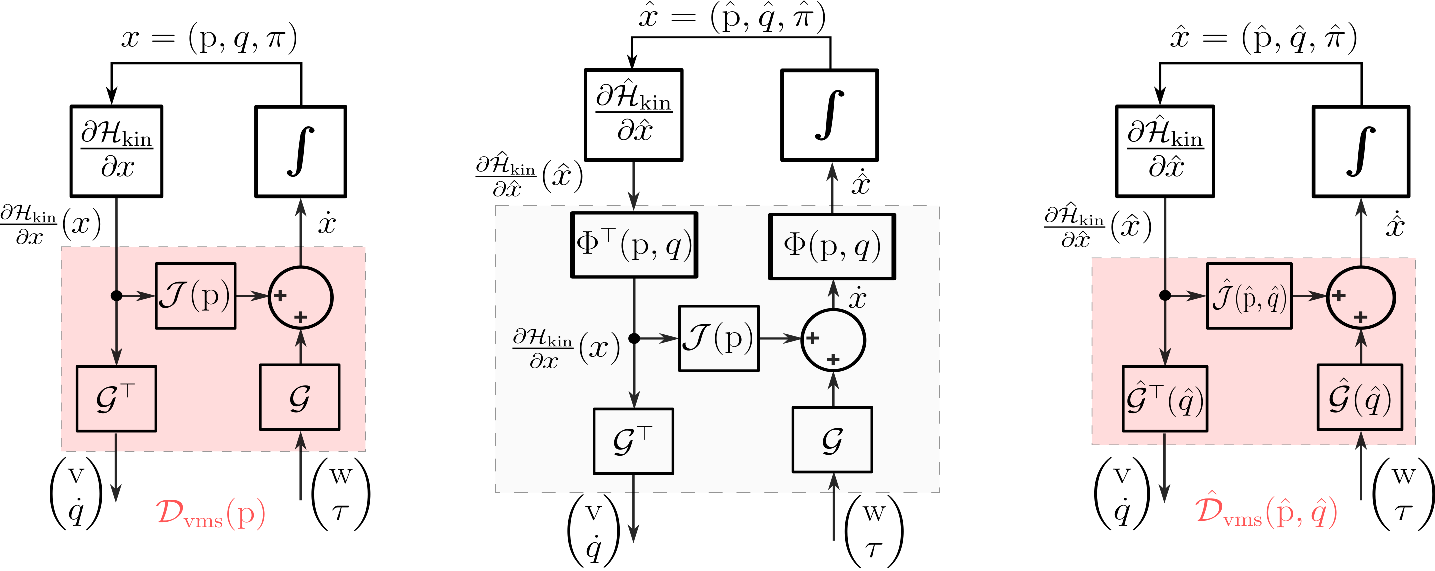}
    \caption{The proposed port-Hamiltonian formulations of VMS dynamics and their corresponding Dirac structures. Left figure shows the \pH formulation \eqref{eq:floating_base_manipulator_pH} while the right 
    figure shows the inertially-decoupled \pH formulation \eqref{eq:decoupled_floating_base_manipulator_pH}. The intermediate figure illustrates the change of coordinates $\phi(q)$.}
    \label{fig:pH_formulation_VMS}
\end{figure*}

\begin{theorem}\label{theorem:pH_formulation_decoupled}
    The inertially-decoupled equations of motion governing the state $\hat{x}:=(\hat\baseMom,\hat{q},\hat{\pi}) \in \cl{X}$ of a VMS in \pH form are given by
\begin{align}
        \ThrVec{\dot{\hat\baseMom}}{\dot{\hat{q}}}{\dot{\hat{\pi}}} =&
        \hat{\cl{J}}(\hat\baseMom,\hat{q})
        \ThrVec{\parGrad{\subTxt{\hat{\cl{H}}}{kin}}{\hat\baseMom}}{\parGrad{\subTxt{\hat{\cl{H}}}{kin}}{\hat{q}}}{\parGrad{\subTxt{\hat{\cl{H}}}{kin}}{\hat{\pi}}} 
        + \hat{\cl{G}}(\hat{q})
        \TwoVec{\baseWrench}{\tau} \nonumber \\
        \TwoVec{\baseVel}{\dot{q}}=& 
        \hat{\cl{G}}^\top(\hat{q})
        \ThrVec{\parGrad{\subTxt{\hat{\cl{H}}}{kin}}{\hat\baseMom}}{\parGrad{\subTxt{\hat{\cl{H}}}{kin}}{\hat{q}}}{\parGrad{\subTxt{\hat{\cl{H}}}{kin}}{\hat{\pi}}} ,\label{eq:decoupled_floating_base_manipulator_pH}
\end{align}
where the Hamiltonian function $\map{\subTxt{\hat{\cl{H}}}{kin}}{\cl{X}}{\bb{R}}$ is defined by $\subTxt{\hat{\cl{H}}}{kin} := \subTxt{\cl{H}}{kin} \circ \phi^{-1}(q)$ and expressed as
\begin{equation}\label{eq:Hamiltonian_fm_decoupled}
    \subTxt{\hat{\cl{H}}}{kin}(\hat\baseMom,\hat{q},\hat{\pi}) = \half \hat\baseMom^\top M_b^{-1}(\hat{q}) \hat\baseMom + \half \hat{\pi}^\top \hat{M}_m^{-1}(\hat{q}) \hat{\pi},
\end{equation}
and the gradients are expressed as
\begin{align}
    \parGrad{\subTxt{\hat{\cl{H}}}{kin}}{\hat\baseMom}(\hat\baseMom,\hat{q},\hat{\pi}) =& M_b^{-1}(\hat{q}) \hat\baseMom , \label{eq:Hamiltonian_grad_p_hat} \\
    \parGrad{\subTxt{\hat{\cl{H}}}{kin}}{\hat{q}}(\hat\baseMom,\hat{q},\hat{\pi}) =& N_b^\top(\hat{\baseMom},\hat{q}) \hat{\baseMom} + \hat{N}_m^\top(\hat{q},\hat{\pi}) \hat{\pi} \label{eq:Hamiltonian_grad_q_hat} \\
    \parGrad{\subTxt{\hat{\cl{H}}}{kin}}{\hat{\pi}}(\hat\baseMom,\hat{q},\hat{\pi}) =& \hat{M}_m^{-1}(\hat{q}) \hat{\pi}, \label{eq:Hamiltonian_grad_pi_hat} 
\end{align}
where $\map{N_b(\hat{\baseMom},\hat{q})}{T_{\hat{q}} Q_m}{\spBaseVel}$ and $\map{\hat{N}_m(\hat{q},\hat{\pi})}{T_{\hat{q}} Q_m}{T_{\hat{q}}^* Q_m}$ are defined by
\begin{align}\label{eq:N_b_and_N_m_definitions}
    N_b(\hat{\baseMom},\hat{q}) :=& \half \left[ \parGrad{M_b^{-1}}{\hat{q}_1}(\hat{q}) \hat{\baseMom},  \cdots, \parGrad{M_b^{-1}}{\hat{q}_n}(\hat{q}) \hat{\baseMom} \right], \\
    \hat{N}_m(\hat{q},\hat{\pi})  :=&\half  \left[ \parGrad{\hat{M}_m^{-1}}{\hat{q}_1}(\hat{q}) \hat{\pi},  \cdots, \parGrad{\hat{M}_m^{-1}}{\hat{q}_n}(\hat{q}) \hat{\pi} \right].
\end{align}

The interconnection and input matrices are defined by the relations
\begin{align}
    \hat{\cl{J}}(\hat\baseMom,\hat{q}) :=& \Phi(\baseMom,q) \cl{J}(\baseMom) \Phi^\top(\baseMom,q), \label{eq:J_fm_hat_definition}\\
    \hat{\cl{G}}(\hat{q}) :=& \Phi(\baseMom,q) \cl{G}. \label{eq:G_fm_hat_definition}
\end{align}
with $q = \hat{q}$ and $\baseMom = \hat\baseMom$, and expressed as
\begin{align}
    \hat{\cl{J}}(\hat\baseMom,\hat{q}) :=& \begin{pmatrix}
            ad_{\hat\baseMom}^\sim & \cdot & - ad_{\hat\baseMom}^\sim A(\hat{q}) \\
            \cdot & \cdot & \bb{I}_n \\
            -A^\top(\hat{q}) ad_{\hat\baseMom}^\sim & -\bb{I}_n & -B(\hat\baseMom,\hat{q})
        \end{pmatrix}, \label{eq:J_fm_hat_expression}\\
         \hat{\cl{G}}(\hat{q}) :=& \begin{pmatrix}
            \bb{I}_b & \cdot \\
            \cdot & \cdot  \\
            -A^\top(\hat{q})  &  \bb{I}_n
        \end{pmatrix}, \label{eq:G_fm_hat_expression}
\end{align}
where $\map{B(\hat\baseMom,\hat{q})}{T_{\hat{q}} Q_m}{T_{\hat{q}}^* Q_m}$ is the skew-symmetric map defined by
\begin{equation}\label{eq:B_definition}
    B(\hat\baseMom,\hat{q}) := - A^\top(\hat{q}) ad_{\hat\baseMom}^\sim A(\hat{q}) + L(\hat\baseMom,\hat{q}) - L^\top(\hat\baseMom,\hat{q}).
\end{equation}
\end{theorem}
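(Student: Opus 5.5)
The plan is to obtain \eqref{eq:decoupled_floating_base_manipulator_pH} from Theorem~\ref{theorem:pH_formulation} by the change of coordinates $\hat x=\phi(x)$ of Lemma~\ref{lemma:phi_map}, using the standard transformation rule for port-Hamiltonian systems under diffeomorphisms. The argument has three steps: differentiate the new state, transform the Hamiltonian gradients, and then compute the explicit forms of $\hat{\cl{J}}$, $\hat{\cl{G}}$ and $\subTxt{\hat{\cl{H}}}{kin}$.

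\textbf{Step 1: chain rule.} By the chain rule, $\dot{\hat x}=\Phi(\baseMom,q)\dot x$, with $\Phi$ given by Lemma~\ref{lemma:tangent_phi]}. Since $\subTxt{\hat{\cl{H}}}{kin}=\subTxt{\cl{H}}{kin}\circ\phi^{-1}$, we have $\subTxt{\cl{H}}{kin}=\subTxt{\hat{\cl{H}}}{kin}\circ\phi$. Hence the gradients transform covariantly:
\[
\parGrad{\subTxt{\cl{H}}{kin}}{x}=\Phi^\top\parGrad{\subTxt{\hat{\cl{H}}}{kin}}{\hat x}.
\]
Substituting into \eqref{eq:floating_base_manipulator_pH} gives
\[
\dot{\hat x}=\Phi\cl{J}\Phi^\top\parGrad{\subTxt{\hat{\cl{H}}}{kin}}{\hat x}+\Phi\cl{G}\TwoVec{\baseWrench}{\tau},
\qquad
\TwoVec{\baseVel}{\dot q}=\cl{G}^\top\Phi^\top\parGrad{\subTxt{\hat{\cl{H}}}{kin}}{\hat x}.
\]
This yields \eqref{eq:J_fm_hat_definition}–\eqref{eq:G_fm_hat_definition} directly, and the output equation takes the required form with $\hat{\cl{G}}^\top$.

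\textbf{Step 2: block multiplication.} The explicit expressions follow by multiplying blocks.

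For $\hat{\cl{G}}$, $\Phi\cl{G}$ has columns equal to the first and third block columns of $\Phi$. This gives \eqref{eq:G_fm_hat_expression}.

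For $\hat{\cl{J}}$, first compute
\[
\cl{J}\Phi^\top=\begin{pmatrix} ad^\sim_{\baseMom} & \cdot & -ad^\sim_{\baseMom}A\\ \cdot & \cdot & \bb{I}_n\\ \cdot & -\bb{I}_n & L^\top\end{pmatrix},
\]
using $\Phi^\top$, whose first row is $(\bb{I}_b,\cdot,-A)$ and second row is $(\cdot,\bb{I}_n,-L^\top)$. Left-multiplying by $\Phi$ leaves the first two block rows unchanged. The third block row is $-A^\top(\text{row }1)-L(\text{row }2)+(\text{row }3)$, which gives
\[
\bigl(-A^\top ad^\sim_{\baseMom},\;-\bb{I}_n,\;A^\top ad^\sim_{\baseMom}A-L+L^\top\bigr)=\bigl(-A^\top ad^\sim_{\baseMom},\;-\bb{I}_n,\;-B\bigr).
\]
Skew-symmetry of $B$ follows from that of $ad^\sim_{\baseMom}$, and of $L-L^\top$.

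\textbf{Step 3: the Hamiltonian and its gradients.} Expression \eqref{eq:Hamiltonian_fm_decoupled} is obtained without inverting $\cl{M}$ by a Legendre-transform argument.

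Since $\subTxt{\cl{L}}{kin}=\subTxt{\hat{\cl{L}}}{kin}$, the kinetic energy has the same value in both velocity descriptions. Lemma~\ref{lemma:phi_map} identifies $(\hat\baseMom,\hat\pi)$ as the conjugate momenta of $(\hat\baseVel,\dot q)$, with $\hat\baseMom=M_b\hat\baseVel$ and $\hat\pi=\hat M_m\dot q$. The energy can therefore be written as
\[
\tfrac12\hat\baseMom^\top M_b^{-1}\hat\baseMom+\tfrac12\hat\pi^\top\hat M_m^{-1}\hat\pi .
\]
The gradients \eqref{eq:Hamiltonian_grad_p_hat} and \eqref{eq:Hamiltonian_grad_pi_hat} are then immediate. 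The $\hat q$-gradient is obtained componentwise: its $i$-th component is
\[
\tfrac12\hat\baseMom^\top\parGrad{M_b^{-1}}{\hat q_i}\hat\baseMom+\tfrac12\hat\pi^\top\parGrad{\hat M_m^{-1}}{\hat q_i}\hat\pi,
\]
which is exactly the $i$-th entry of $N_b^\top\hat\baseMom+\hat N_m^\top\hat\pi$ by \eqref{eq:N_b_and_N_m_definitions}.

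\textbf{Main obstacle.} I expect the main care point to be justifying the gradient transformation when $\phi$ depends on $\baseMom$ and $q$ jointly. In particular, $\Phi$ must be evaluated at $x$ while the gradients are evaluated at $\hat x$, and the index conventions of $L$ must match those used for $\delta A^\top\baseMom=L\,\delta q$.

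I would handle this by writing out the total differential $d\subTxt{\cl{H}}{kin}=d\subTxt{\hat{\cl{H}}}{kin}\circ\Phi$ pairing against arbitrary $\delta x$. This makes the transpose $\Phi^\top$ explicit and also confirms the sign of the $L^\top$ term in $B$.
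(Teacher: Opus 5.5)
Your proposal is correct and follows the paper's proof: you push the dynamics of Theorem~\ref{theorem:pH_formulation} forward through $\Phi$, pull the gradients back with $\Phi^\top$, and then carry out the block multiplication that the paper leaves as ``direct computation''. Your explicit $\Phi\cl{J}\Phi^\top$ (including the $-B$ entry) and the output map $\cl{G}^\top\Phi^\top=\hat{\cl{G}}^\top$ check out. The one minor difference is how you get \eqref{eq:Hamiltonian_fm_decoupled}: you use the Legendre transform of $\subTxt{\hat{\cl{L}}}{kin}$ together with Lemma~\ref{lemma:phi_map}, whereas the paper substitutes the congruence $\cl{M}=\Psi^\top\hat{\cl{M}}\Psi$ into \eqref{eq:Hamiltonian_fm}, which is the same fact written in matrix form.
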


\begin{proof}
See Appendix \ref{sec:proof_of_pH_theorem_decoupled}.
\end{proof}



\begin{corollary}\label{corollary:pH_formulation_decoupled}
The Hamiltonian \eqref{eq:Hamiltonian_fm_decoupled} satisfies along trajectories $(\hat\baseMom(t), \hat q(t), \hat \pi(t)) \in \cl{X}$ of the \pH dynamics 
\eqref{eq:decoupled_floating_base_manipulator_pH} the power balance
\begin{equation}\label{eq:fm_power_balance_decoupled_1} 
    \subTxt{\dot{\hat{\cl{H}}}}{kin} = \pair{\baseWrench}{\baseVel} + \pair{\tau}{\dot{q}},
\end{equation}
stating that the rate of change of the total kinetic energy of the VMS is equal to the total power supplied due to the resultant base wrench and joint torques.
The power balance can also be expressed equivalently as
\begin{equation}\label{eq:fm_power_balance_decoupled_2} 
    \subTxt{\dot{\hat{\cl{H}}}}{kin} = \pair{\hat{\baseWrench}}{\hat{\baseVel}} + \langle \hat{\tau} | \dot{\hat{q}} \rangle ,
\end{equation}
where $\hat{\baseWrench} \in \spBaseVel^*$ and $\hat{\tau} \in T_q^*Q_m$ are defined by
\begin{equation}\label{eq:hat_inputs}
    \hat{\baseWrench} := \baseWrench, \qquad \qquad \hat{\tau} := \tau - A^\top (\hat{q})\baseWrench ,
\end{equation}
and represent the power conjugate variables to the locked velocity $\hat{\baseVel}$ and joint velocities $\dot{\hat{q}}$, respectively.

\end{corollary}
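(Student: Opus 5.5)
The plan is to transport the power balance of Corollary~\ref{corollary:pH_formulation} through the change of coordinates $\phi(q)$, and then regroup the supplied power using the locked velocity.

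For \eqref{eq:fm_power_balance_decoupled_1}, write $\hat{x}=\phi(q)(x)$. Since $\subTxt{\hat{\cl{H}}}{kin}=\subTxt{\cl{H}}{kin}\circ\phi^{-1}(q)$, we have $\subTxt{\hat{\cl{H}}}{kin}(\hat{x}(t))=\subTxt{\cl{H}}{kin}(x(t))$ along corresponding trajectories, so the claim follows immediately from \eqref{eq:fm_power_balance}.

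I would also give a direct check, because it confirms that the structure matrices of Theorem~\ref{theorem:pH_formulation_decoupled} are right. By the chain rule, $\parGrad{\subTxt{\cl{H}}{kin}}{x}=\Phi^\top\parGrad{\subTxt{\hat{\cl{H}}}{kin}}{\hat{x}}$. Then
\begin{equation*}
\subTxt{\dot{\hat{\cl{H}}}}{kin}=\pair{\parGrad{\subTxt{\hat{\cl{H}}}{kin}}{\hat{x}}}{\hat{\cl{J}}\parGrad{\subTxt{\hat{\cl{H}}}{kin}}{\hat{x}}+\hat{\cl{G}}(\baseWrench,\tau)}.
\end{equation*}
The first term vanishes because $\hat{\cl{J}}=\Phi\cl{J}\Phi^\top$ is skew, which requires $ad^\sim_{\hat\baseMom}$ and $B$ to be skew. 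The second term equals $\pair{(\baseWrench,\tau)}{\hat{\cl{G}}^\top\parGrad{\subTxt{\hat{\cl{H}}}{kin}}{\hat{x}}}=\pair{\baseWrench}{\baseVel}+\pair{\tau}{\dot{q}}$ by the output equation in \eqref{eq:decoupled_floating_base_manipulator_pH}. The only point needing care is that this output equation really returns $(\baseVel,\dot q)$. This holds because $\hat{\cl{G}}^\top\parGrad{\subTxt{\hat{\cl{H}}}{kin}}{\hat x}=\cl{G}^\top\Phi^\top\parGrad{\subTxt{\hat{\cl{H}}}{kin}}{\hat x}=\cl{G}^\top\parGrad{\subTxt{\cl{H}}{kin}}{x}$.

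For \eqref{eq:fm_power_balance_decoupled_2}, substitute $\baseVel=\hat{\baseVel}-A(q)\dot q$ and $\dot{\hat q}=\dot q$:
\begin{equation*}
\pair{\baseWrench}{\baseVel}+\pair{\tau}{\dot q}=\pair{\baseWrench}{\hat\baseVel}-\pair{A^\top(q)\baseWrench}{\dot q}+\pair{\tau}{\dot q}=\pair{\hat\baseWrench}{\hat\baseVel}+\langle \hat\tau|\dot{\hat q}\rangle,
\end{equation*}
which uses the definitions \eqref{eq:hat_inputs}.

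To justify that $\hat\baseVel$ and $\dot{\hat q}$ are the natural conjugates, I would also note that $\parGrad{\subTxt{\hat{\cl{H}}}{kin}}{\hat\baseMom}=M_b^{-1}\hat\baseMom=\hat\baseVel$ by Lemma~\ref{lemma:phi_map}, and that $\parGrad{\subTxt{\hat{\cl{H}}}{kin}}{\hat\pi}=\hat M_m^{-1}\hat\pi=\dot q$.

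The main obstacle is modest. It is making sure the skew-symmetry of $\hat{\cl{J}}$ and the output relation are consistent. Both follow from the congruence $\hat{\cl{J}}=\Phi\cl{J}\Phi^\top$ and from $\hat{\cl{G}}=\Phi\cl{G}$, so I would lean on these relations rather than on the explicit block forms.
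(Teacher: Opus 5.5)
Your proposal is correct and follows essentially the paper's argument. For the first balance, both you and the paper combine the coordinate invariance $\dot{\cl{F}} = \dot{\hat{\cl{F}}}$ with the skew-symmetry of $\hat{\cl{J}} = \Phi\cl{J}\Phi^\top$; note that this skew-symmetry is inherited from $\cl{J}$ by congruence, so the skew-symmetry of $B$ is a consequence rather than an extra requirement. For the second form, your regrouping via $\baseVel = \hat{\baseVel} - A(q)\dot q$ is the transposed reading of the paper's step, which pairs the Hamiltonian gradient with $\hat{\cl{G}}(\baseWrench,\tau) = (\hat{\baseWrench}, 0, \hat{\tau})$.
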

\begin{proof}
See Appendix \ref{sec:proof_of_pH_theorem_decoupled}.
\end{proof}

\section{Lagrangian Counterparts of VMS Dynamics} \label{sec:fm_dyn_lagrangian}
In this section we show the equivalence between the \pH formulation of the inertially-decoupled VMS dynamics \eqref{eq:decoupled_floating_base_manipulator_pH}
and its Lagrangian counterparts given in the works of \cite{Marsden1993,Muller2023,Mishra2023,Moghaddam2024}.

We shall start by the following Lemmas that establish a number of properties of the mass matrices $M_b$ and $\hat{M}_m$ as well as the relation between the gradients of the Hamiltonian \eqref{eq:Hamiltonian_fm_decoupled} 
and the velocity variables $(\hat\baseVel, q, \dot{q})$ used in Lagrangian formulations.

\begin{lemma}\label{lemma:mass_matrix_properties}
The mass matrices $M_b$ and $\hat{M}_m$ defined in (\ref{eq:mass_matrix_Mb},\ref{eq:mass_matrix_Mm}) satisfy the following properties:
\begin{align}
    \dot{M}_m(q) \dot{q} &= \hat{C}_m(q,\dot{q}) \dot{q} + \hat{C}_m^\top(q,\dot{q}) \dot{q} \label{eq:mass_matrix_property_Mm} \\
    \parGrad{}{q} \left( \dot{q}^\top \hat{M}_m(q) \dot{q} \right) &= 2\ \hat{C}_m^\top(q,\dot{q}) \dot{q} \label{eq:mass_matrix_property_grad_Mm}\\
    \dot{M}_b(q) \hat{\baseVel} &= E_b(\hat{\baseVel},q) \dot{q} + P_b(q,\dot{q}) \hat{\baseVel} \label{eq:mass_matrix_property_Mb}\\
    \parGrad{}{q} \left( \hat{\baseVel}^\top M_b(q) \hat{\baseVel} \right) &= 2\ E_b^\top(\hat{\baseVel},q) \hat{\baseVel} \label{eq:mass_matrix_property_grad_Mb}
\end{align}
where $\map{\hat{C}_m(q,\dot{q})}{T_q Q_m}{T_q^* Q_m}$ denotes the canonical Coriolis matrix associated to $M_b$,
while $\map{E_b(\hat{\baseVel}, q)}{T_q Q_m}{\spBaseVel}$ is defined by
\begin{equation}
    E_b(\hat{\baseVel}, q) := \half \left[\parGrad{M_b}{q_1}(q) \hat\baseVel,  \cdots,  \parGrad{M_b}{q_n}(q) \hat\baseVel \right],
\end{equation}
and $\map{P_b(q,\dot{q})}{\spBaseVel}{\spBaseVel}$ is defined by
\begin{equation}
    P_b(q,\dot{q}) := \half \sum_{k=1}^n \parGrad{M_b}{q^k}(q) \dot{q}^k.
\end{equation}
\end{lemma}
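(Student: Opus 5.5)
The plan is a direct component-wise computation in local coordinates $q=(q_1,\dots,q_n)$ on $Q_m$. These are globally valid up to angle wrapping, because all joints are 1-DoF, and the partial derivatives $\parGrad{M_b}{q_k}$, $\parGrad{\hat{M}_m}{q_k}$ are well defined. I read the statement as intended: in \eqref{eq:mass_matrix_property_Mm} the left-hand side is $\dot{\hat{M}}_m(q)\dot q$, and $\hat{C}_m$ is the canonical (Christoffel) Coriolis matrix associated to $\hat{M}_m$ (the text says $M_b$, but the indices and equations only make sense for $\hat{M}_m$). Concretely, I define
\begin{equation*}
[\hat{C}_m(q,\dot q)]_{ij} := \half \sum_{k=1}^n \left( \parGrad{[\hat{M}_m]_{ij}}{q_k} + \parGrad{[\hat{M}_m]_{ik}}{q_j} - \parGrad{[\hat{M}_m]_{jk}}{q_i} \right)\dot q_k .
\end{equation*}
In \eqref{eq:mass_matrix_property_grad_Mm} and \eqref{eq:mass_matrix_property_grad_Mb}, $\dot q$ and $\hat\baseVel$ are treated as independent of $q$ under the partial derivative.

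\textbf{First two identities.} For \eqref{eq:mass_matrix_property_Mm}, I would first show the stronger matrix identity $\dot{\hat{M}}_m = \hat{C}_m + \hat{C}_m^\top$.
\begin{itemize}
\item Adding the entries $(i,j)$ and $(j,i)$ of the definition above, the last two terms cancel by symmetry of $\hat{M}_m$.
\item This leaves $\sum_k \parGrad{[\hat{M}_m]_{ij}}{q_k}\dot q_k = [\dot{\hat{M}}_m]_{ij}$ by the chain rule.
\item Multiplying on the right by $\dot q$ gives \eqref{eq:mass_matrix_property_Mm}.
\end{itemize}
For \eqref{eq:mass_matrix_property_grad_Mm}, I compute the $j$-th component of $\hat{C}_m^\top\dot q$, namely $\sum_{i,k}[\ldots]\dot q_i\dot q_k$.
\begin{itemize}
\item Swapping the dummy indices $i\leftrightarrow k$ and using symmetry of $\hat{M}_m$, the first and third terms cancel.
\item The middle term gives $\half\,\dot q^\top \parGrad{\hat{M}_m}{q_j}\dot q$.
\item This is exactly half of $\parGrad{}{q_j}(\dot q^\top\hat{M}_m\dot q)$, which proves the identity.
\end{itemize}

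\textbf{Last two identities.} These are simpler, since no Christoffel symbols are involved.
\begin{itemize}
\item By the chain rule, $\dot M_b\hat\baseVel = \sum_k \parGrad{M_b}{q_k}\dot q_k\,\hat\baseVel$.
\item I split this sum into two equal halves. One half is regrouped column-wise as $\sum_k \big(\half\parGrad{M_b}{q_k}\hat\baseVel\big)\dot q_k = E_b(\hat\baseVel,q)\dot q$, since the $k$-th column of $E_b$ is $\half\parGrad{M_b}{q_k}\hat\baseVel$.
\item The other half is $\big(\half\sum_k\parGrad{M_b}{q_k}\dot q_k\big)\hat\baseVel = P_b(q,\dot q)\hat\baseVel$. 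Together these give \eqref{eq:mass_matrix_property_Mb}.
\item For \eqref{eq:mass_matrix_property_grad_Mb}, the $k$-th component of the gradient is $\hat\baseVel^\top\parGrad{M_b}{q_k}\hat\baseVel$.
\item This equals $2\,(\text{column } k \text{ of } E_b)^\top\hat\baseVel = 2\,[E_b^\top\hat\baseVel]_k$, using symmetry of $\parGrad{M_b}{q_k}$.
\end{itemize}

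\textbf{Main obstacle.} The computations are routine. The real work is fixing the intended definitions: that $\hat{C}_m$ refers to $\hat{M}_m$, that $\dot{M}_m$ means $\dot{\hat{M}}_m$, and being explicit about which variables are held fixed under $\parGrad{}{q}$. I also need to check that the symmetry of $M_b$ and of the Schur complement $\hat{M}_m$ holds. This follows from the symmetry of $\cl{M}(q)$ and the definitions \eqref{eq:mass_matrix_Mb}, \eqref{eq:mass_matrix_Mm}, and $\hat{M}_m = M_m - M_{bm}^\top M_b^{-1}M_{bm}$. That symmetry is what drives every cancellation above.
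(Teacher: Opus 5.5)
Your proposal is correct and follows the paper's own proof: a direct index computation using the Christoffel-symbol definition of $\hat{C}_m$, and a chain-rule split of $\dot M_b\hat{\baseVel}$ into the $E_b$ and $P_b$ halves. The only difference is minor: you first prove the full matrix identity $\dot{\hat{M}}_m = \hat{C}_m + \hat{C}_m^\top$ and then multiply by $\dot q$, whereas the paper computes $\hat{C}_m\dot q$ and $\hat{C}_m^\top\dot q$ separately and adds them. Your reading of the typos ($\dot{\hat{M}}_m$ on the left of the first identity, and $\hat{C}_m$ associated to $\hat{M}_m$ rather than $M_b$) is also exactly what the paper's proof uses.
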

\begin{proof}
The proof in this lemma will be presented using index notation for clarity and compactness. Furthermore, repeated indices imply summation over their
range following Einstein's summation convention. The range for indices $i,j,k$ is from $1$ to $n$ while for indices $I,J$ it is from $1$ to $b$.

i) We have that in component form, the time derivative of $\hat{M}_m$ can be expressed as $[\dot{\hat{M}}_m]_{ij} = \partial_k \hat{M}_{ij} \dot{q}^k$,
where $\hat{M}_{ij}$ denote the components of $\hat{M}_m$ and $\partial_k \hat{M}_{ij}$ is short for $\parGrad{\hat{M}_{ij}}{q^k}(q)$.

The components of $\hat{C}_m$ are given by the Christoffel symbols of the first kind as
\begin{equation}
    [\hat{C}_m]_{ij} = \half \left( \partial_k \hat{M}_{ij} + \partial_j \hat{M}_{ik} - \partial_i \hat{M}_{jk} \right) \dot{q}^k.
\end{equation}
It is straightforward to verify that $[\hat{C}_m \dot{q}]_{i} = (\partial_k \hat{M}_{ij} - \half \partial_i \hat{M}_{jk})\dot{q}^j\dot{q}^k $ and that
$[\hat{C}_m^\top \dot{q}]_{i} = \half \partial_i \hat{M}_{jk} \dot{q}^j\dot{q}^k$. Therefore, adding these two expressions yields \eqref{eq:mass_matrix_property_Mm}.

ii) The gradient of the quadratic form $\dot{q}^\top \hat{M}_m(q) \dot{q}$ with respect to $q$ can be computed as
\begin{equation}
    \parGrad{}{q^i} \left( \dot{q}^\top \hat{M}_m(q) \dot{q} \right) = \partial_i \hat{M}_{jk} \dot{q}^j \dot{q}^k,
\end{equation}
which is clearly equal to $2\ [\hat{C}_m^\top \dot{q}]_{i}$

iii) In component form, the time derivative of $M_b$ can be expressed as $[\dot{M}_b]_{IJ} = \partial_k M_{IJ} \dot{q}^k$,
where $M_{IJ}$ denote the components of the matrix $M_b$. The components of $E_b$ are defined as $[E_b]_{Ik} = \half \partial_k M_{IJ} \hat{\baseVel}^J$.
It is straightforward to verify that $[E_b \dot{q}]_{I} = \half \partial_k M_{IJ} \hat{\baseVel}^J \dot{q}^k$ and that
$[P_b \hat{\baseVel}]_{I} = \half \partial_k M_{IJ} \dot{q}^k \hat{\baseVel}^J$. Therefore, adding these two expressions yields \eqref{eq:mass_matrix_property_Mb}.

iv) Finally similar to ii), the gradient of the quadratic form $\hat{\baseVel}^\top M_b(q) \hat{\baseVel}$ with respect to $q$ has components $\partial_k M_{IJ} \hat{\baseVel}^I \hat{\baseVel}^J$
which is equal to $2\ [E_b^\top \hat{\baseVel}]_{k}$.
\end{proof}

\begin{lemma}\label{lemma:gradients_relation}
The gradients of the Hamiltonian \eqref{eq:Hamiltonian_fm_decoupled} 
given in (\ref{eq:Hamiltonian_grad_p_hat}-\ref{eq:Hamiltonian_grad_pi_hat}) are given in terms of $(\hat\baseVel, q, \dot{q}) \in \spBaseVel\times TQ_m$ as
\begin{align}
    \parGrad{\subTxt{\hat{\cl{H}}}{kin}}{\hat\baseMom}(\hat\baseVel, q, \dot{q}) =& \hat{\baseVel} , \label{eq:Hamiltonian_grad_p_hat_2} \\
    \parGrad{\subTxt{\hat{\cl{H}}}{kin}}{\hat{q}}(\hat\baseVel, q, \dot{q}) =& - E_b^\top (\hat{\baseVel}, q) \hat{\baseVel} - \hat{C}_m^\top(q,\dot{q}) \dot{q},  \label{eq:Hamiltonian_grad_q_hat_2} \\
    \parGrad{\subTxt{\hat{\cl{H}}}{kin}}{\hat{\pi}}(\hat\baseVel, q, \dot{q}) =& \dot{q}. \label{eq:Hamiltonian_grad_pi_hat_2} 
\end{align}
\end{lemma}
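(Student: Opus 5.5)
The plan is to express each of the three gradients in Theorem~\ref{theorem:pH_formulation_decoupled}, namely \eqref{eq:Hamiltonian_grad_p_hat}--\eqref{eq:Hamiltonian_grad_pi_hat}, in terms of $(\hat\baseVel,q,\dot q)$. The tools are the momentum relations from the proof of Lemma~\ref{lemma:phi_map}, $\hat\baseMom = M_b\hat\baseVel$ and $\hat\pi = \hat M_m\dot q$, together with the quadratic-form identities of Lemma~\ref{lemma:mass_matrix_properties}.

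The first and third identities are immediate. Substituting the momentum relations gives $M_b^{-1}\hat\baseMom = \hat\baseVel$ and $\hat M_m^{-1}\hat\pi = \dot q$, which are \eqref{eq:Hamiltonian_grad_p_hat_2} and \eqref{eq:Hamiltonian_grad_pi_hat_2}.

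For the $\hat q$-gradient I will not manipulate $N_b$ and $\hat N_m$ directly. Instead I use the standard Legendre-transform fact that the configuration gradient of the Hamiltonian at fixed momenta is minus the configuration gradient of the Lagrangian at fixed velocities. Concretely, I differentiate the inverse-matrix identity:
\[
\parGrad{M_b^{-1}}{q_k} = -M_b^{-1}\parGrad{M_b}{q_k}M_b^{-1},
\]
and similarly for $\hat M_m$. Then the $k$-th component of $N_b^\top\hat\baseMom$ is
\[
\half\hat\baseMom^\top\parGrad{M_b^{-1}}{q_k}\hat\baseMom = -\half\hat\baseVel^\top\parGrad{M_b}{q_k}\hat\baseVel ,
\]
and likewise the $k$-th component of $\hat N_m^\top\hat\pi$ is $-\half\dot q^\top\parGrad{\hat M_m}{q_k}\dot q$. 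Hence
\[
\parGrad{\subTxt{\hat{\cl{H}}}{kin}}{\hat q} = -\half\parGrad{}{q}\left(\hat\baseVel^\top M_b\hat\baseVel\right) - \half\parGrad{}{q}\left(\dot q^\top \hat M_m\dot q\right),
\]
where the partial derivatives are taken with the velocities held fixed. Invoking \eqref{eq:mass_matrix_property_grad_Mb} and \eqref{eq:mass_matrix_property_grad_Mm} turns these two terms into $-E_b^\top\hat\baseVel$ and $-\hat C_m^\top\dot q$, which gives \eqref{eq:Hamiltonian_grad_q_hat_2}.

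The main obstacle is bookkeeping rather than anything conceptual. One point needs care: $N_b^\top\hat\baseMom$ must be read componentwise as $\half\hat\baseMom^\top\partial_k M_b^{-1}\hat\baseMom$, which relies on the symmetry of $M_b^{-1}$. The other is to state explicitly that the partial derivatives in Lemma~\ref{lemma:mass_matrix_properties} are taken at fixed $(\hat\baseVel,\dot q)$. I will also note that $\hat C_m$ is the Christoffel-based Coriolis matrix of $\hat M_m$, not of $M_b$ as the text of the lemma says, since that is how it is used in its proof. I will write the whole derivation in index notation, matching the proof of Lemma~\ref{lemma:mass_matrix_properties}.
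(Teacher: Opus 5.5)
Your proposal is correct and follows essentially the same route as the paper. The first and third identities come from $\hat\baseVel = M_b^{-1}\hat\baseMom$ and $\dot q = \hat M_m^{-1}\hat\pi$, and the $\hat q$-gradient comes from $\partial_k M_b^{-1} = -M_b^{-1}\partial_k M_b M_b^{-1}$ (likewise for $\hat M_m$) combined with properties (iv) and (ii) of Lemma~\ref{lemma:mass_matrix_properties}. Your side remarks are accurate and sharpen the paper's terse argument: the partial derivatives are taken at fixed velocities, and $\hat C_m$ is the Coriolis matrix of $\hat M_m$, not of $M_b$.
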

\begin{proof}
The proof of \eqref{eq:Hamiltonian_grad_p_hat_2} and \eqref{eq:Hamiltonian_grad_pi_hat_2} follow directly from $\hat\baseVel = M_b^{-1}(q) \hat\baseMom$ and $\dot{q} = \hat{M}_m^{-1}(q) \hat{\pi}$.
As for the first term in \eqref{eq:Hamiltonian_grad_q_hat}, using the symmetry of $M_b$ and the property $\partial_i M_b^{-1} = - M_b^{-1} \partial_i M_b M_b^{-1} $ 
we have that
$$2[N_b^\top \hat{\baseMom}]_k =  \hat{\baseMom}^\top \partial_k M_b^{-1} \hat{\baseMom} =  \hat{\baseVel}^\top M_b\partial_k M_b^{-1} M_b\hat{\baseVel} 
= -  \hat{\baseVel}^\top \partial_k M_b \hat{\baseVel},$$
which in index notation reads as $2[N_b^\top \hat{\baseMom}]_k = - \partial_k M_{IJ} \hat{\baseVel}^I \hat{\baseVel}^J$.
Similar to the proof of Lemma \ref{lemma:mass_matrix_properties} (iv), we have that $N_b^\top \hat{\baseMom} = - E_b^\top \hat{\baseVel}$.

The second term in \eqref{eq:Hamiltonian_grad_q_hat} can be shown in the same manner as a result of $\hat{N}_m^\top \hat{\pi} = - \hat{C}_m^\top \dot{q}$.
\end{proof}




\begin{table}
    \centering
    \begin{tabular}{|c|c|c|}
        \hline
        \textbf{Map} & \textbf{Coordinate-free} & \textbf{Coordinate-based} \\
        \hline
        $P_b(q,\dot{q})$ & ${\spBaseVel}\rightarrow{\spBaseVel}$ & $\R{b\times b}$ \\
        $E_b(\hat\baseVel,q)$ & ${T_q Q_m}\rightarrow{\spBaseVel}$ & $\R{b\times n}$ \\
        $\hat{C}_m(q,\dot{q})$ & ${T_q Q_m}\rightarrow{T_q^* Q_m}$ & $\R{n\times n}$ \\
        $\hat{B}(\hat\baseVel,\hat{q})$ & ${T_{\hat{q}} Q_m}\rightarrow{T_{\hat{q}}^* Q_m}$ & $\R{n\times n}$ \\
        $\hat{\cl{M}}, \hat{\cl{C}}_1, \hat{\cl{C}}_2$ & ${\spBaseVel \times T_q Q_m}\rightarrow{\spBaseVel^* \times T_q^* Q_m}$ & $\R{(b+n)\times(b+n)}$ \\
        \hline
    \end{tabular}
    \caption{Coordinate-free and coordinate-based representations of maps in \eqref{eq:lagrangian_floating_base_manipulator}}
    \label{table:matrix_representations_Lagrangian_VMS}
\end{table}

\subsection{Equivalence with Reduced Euler-Lagrange equations of \cite{Mishra2023,Moghaddam2024}}

\begin{proposition}\label{proposition:reduced_euler_lagrange_equivalence}
    The \pH dynamics of the VMS given in \eqref{eq:floating_base_manipulator_pH} are equivalent
     to the reduced Euler-Lagrange equations of motion of a VMS given by
\begin{equation}\label{eq:lagrangian_floating_base_manipulator}
    \begin{split}
        \hat{\cl{M}}(q) \TwoVec{\dot{\hat{\baseVel}}}{\ddot{q}} + \hat{\cl{C}}_1(\hat{\baseVel}, q,\dot{q}) \TwoVec{\hat{\baseVel}}{\dot{q}} 
         + \hat{\cl{C}}_2(\hat{\baseVel},q)\TwoVec{\hat{\baseVel}}{\dot{q}}  = \TwoVec{\hat{\baseWrench}}{\hat{\tau}}
    \end{split}
\end{equation}
where 
\begin{align}
    \hat{\cl{M}}(q) :=& \TwoTwoMat{M_b(q)}{0_{b\times n}}{0_{n\times b}}{\hat{M}_m(q)},\label{eq:mass_matrix_decoupled} \\
    \hat{\cl{C}}_1(\hat{\baseVel}, q,\dot{q}) :=& \TwoTwoMat{P_b(q,\dot{q})}{E_b(\hat{\baseVel},q)}{- E_b^\top(\hat{\baseVel},q)}{\hat{C}_m(q,\dot{q})},\label{eq:C1_hat}\\
    \hat{\cl{C}}_2(\hat{\baseVel},q):=& \TwoTwoMat{- ad_{M_b(q) \hat{\baseVel}}^\sim}{ ad_{M_b(q) \hat{\baseVel}}^\sim A(q)}
         {A^\top(q) ad_{M_b(q) \hat{\baseVel}}^\sim }{\hat{B}(\hat\baseVel,q)},
\end{align}
while $\hat{\baseWrench}$ and $\hat{\tau}$ are defined in \eqref{eq:hat_inputs} and $\hat{B}$ is defined from \eqref{eq:B_definition} such that
$$\hat{B}(\hat\baseVel,q) := B(M_b(q) \hat{\baseVel},q).$$
\end{proposition}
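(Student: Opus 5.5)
The plan is to go through the inertially-decoupled form of Theorem~\ref{theorem:pH_formulation_decoupled} rather than work on \eqref{eq:floating_base_manipulator_pH} directly. That theorem is obtained from \eqref{eq:floating_base_manipulator_pH} by the diffeomorphism $\phi(q)$ of Lemma~\ref{lemma:phi_map}. Its tangent map $\Phi$ in Lemma~\ref{lemma:tangent_phi} is invertible, and the change of coordinates maps trajectories bijectively. So it suffices to show that \eqref{eq:decoupled_floating_base_manipulator_pH} is equivalent to \eqref{eq:lagrangian_floating_base_manipulator}, using the Legendre relations $\hat\baseMom = M_b(q)\hat\baseVel$ and $\hat\pi = \hat M_m(q)\dot q$.

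The first step is to write out the rows of \eqref{eq:decoupled_floating_base_manipulator_pH} using \eqref{eq:J_fm_hat_expression}, \eqref{eq:G_fm_hat_expression} and the gradients from Lemma~\ref{lemma:gradients_relation}.

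\emph{Second row.} This gives $\dot{\hat q} = \parGrad{\subTxt{\hat{\cl{H}}}{kin}}{\hat\pi} = \dot q$, which is trivially consistent.

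\emph{First row.} This gives
\[
\dot{\hat\baseMom} = ad^\sim_{\hat\baseMom}\hat\baseVel - ad^\sim_{\hat\baseMom}A\dot q + \baseWrench .
\]
Substitute $\dot{\hat\baseMom} = M_b\dot{\hat\baseVel} + \dot M_b\hat\baseVel$ and apply \eqref{eq:mass_matrix_property_Mb} to write $\dot M_b\hat\baseVel = E_b\dot q + P_b\hat\baseVel$. Moving terms to the left then yields the first block row of \eqref{eq:lagrangian_floating_base_manipulator}, with $\hat\baseWrench = \baseWrench$.

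\emph{Third row.} This gives
\[
\dot{\hat\pi} = -A^\top ad^\sim_{\hat\baseMom}\hat\baseVel - \parGrad{\subTxt{\hat{\cl{H}}}{kin}}{\hat q} - B\dot q - A^\top\baseWrench + \tau .
\]
Replace $-\parGrad{\subTxt{\hat{\cl{H}}}{kin}}{\hat q}$ by $E_b^\top\hat\baseVel + \hat C_m^\top\dot q$ using \eqref{eq:Hamiltonian_grad_q_hat_2}. Expand $\dot{\hat\pi} = \hat M_m\ddot q + \dot{\hat M}_m\dot q$ and use \eqref{eq:mass_matrix_property_Mm} to get $\dot{\hat M}_m\dot q = \hat C_m\dot q + \hat C_m^\top\dot q$. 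The $\hat C_m^\top\dot q$ terms cancel, leaving
\[
\hat M_m\ddot q + \hat C_m\dot q - E_b^\top\hat\baseVel + A^\top ad^\sim_{\hat\baseMom}\hat\baseVel + \hat B\dot q = \tau - A^\top\baseWrench = \hat\tau .
\]
This is the second block row once $\hat\baseMom = M_b\hat\baseVel$ is inserted, so that $B(\hat\baseMom,q) = \hat B(\hat\baseVel,q)$. The converse direction follows by reading these manipulations backwards, since each step is an identity given the Legendre maps.

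I expect the main obstacle to be bookkeeping rather than a conceptual difficulty. First, \eqref{eq:mass_matrix_property_Mm} is stated for $\dot M_m$, but it is needed for $\dot{\hat M}_m$, with $\hat C_m$ the Christoffel-type Coriolis matrix of $\hat M_m$. I would reprove it in that form by repeating the index argument of Lemma~\ref{lemma:mass_matrix_properties}(i) verbatim with $\hat M_{ij}$; the argument never uses the specific structure of $M_m$. Second, the signs of the $ad^\sim$ blocks in $\hat{\cl{C}}_2$ must match those of $\hat{\cl{J}}$. In particular, the first-row term $ad^\sim_{\hat\baseMom}\hat\baseVel$ moves to the left-hand side as $-ad^\sim_{M_b\hat\baseVel}\hat\baseVel$, and $-ad^\sim_{\hat\baseMom}A\dot q$ moves as $+ad^\sim_{M_b\hat\baseVel}A\dot q$. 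I would check these by the skew-symmetry of $ad^\sim$ and the duality \eqref{eq:ad_sim_def}.
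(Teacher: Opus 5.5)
Your proposal is correct and follows essentially the paper's route: substitute the gradients of Lemma~\ref{lemma:gradients_relation} into the first and third rows of \eqref{eq:decoupled_floating_base_manipulator_pH}, expand $\dot{\hat\baseMom}$ and $\dot{\hat\pi}$ through the Legendre relations and Lemma~\ref{lemma:mass_matrix_properties}, and collect terms into $\hat{\cl{M}}$, $\hat{\cl{C}}_1$, $\hat{\cl{C}}_2$ (the paper, too, relies on Theorem~\ref{theorem:pH_formulation_decoupled} and the map $\phi$ to pass from \eqref{eq:floating_base_manipulator_pH} to the decoupled form). Your bookkeeping is also sound where the paper's slips: the Coriolis identity is indeed needed for $\dot{\hat M}_m$ rather than $\dot M_m$, and your third-row result with $-E_b^\top\hat\baseVel$ on the left-hand side is correct, whereas the paper's intermediate display for $\hat M_m\ddot q$ carries a sign error on that term.
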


\begin{proof}
See Appendix \ref{sec:proof_of_reducedEL_Proposition}.
\end{proof}

\subsection{Equivalence with Boltzmann-Hamel's equations of \cite{Muller2023}}

\begin{proposition}\label{proposition:boltzmann_hamel_equivalence}
    The \pH dynamics of VMS given in \eqref{eq:decoupled_floating_base_manipulator_pH} are equivalent
     to the Boltzmann-Hamel equations given by
\begin{equation}\label{eq:boltzmann_hamel_floating_base_manipulator}
    \begin{split}
        \dt \parGrad{\hat{l}}{\hat{\baseVel}^I} + \left(\gamma_{IJ}^K \hat{\baseVel}^J + \gamma_{Ij}^K \dot{q}^j\right) \parGrad{\hat{l}}{\hat{\baseVel}^K} = \hat{\baseWrench}_I,\\
        \dt \parGrad{\hat{l}}{\dot{q}^i} - \parGrad{\hat{l}}{q^i}  + \left(\gamma_{iJ}^K \hat{\baseVel}^J + \gamma_{ij}^K \dot{q}^j \right)  \parGrad{\hat{l}}{\hat{\baseVel}^K} = \hat{\tau}_i,
    \end{split}
\end{equation}
where the indices $I,J,K$ range from $1$ to $b$ while the indices $i,j$ range from $1$ to $n$.
The Hamel coefficients are defined as
\begin{align}
    \gamma_{IJ}^K :=& c_{IJ}^K, \label{eq:Hamel_coeff_1}\\
    \gamma_{Ij}^K :=& - \gamma_{jI}^K := - c_{IJ}^K A_j^J, \label{eq:Hamel_coeff_2}\\
    \gamma_{ij}^K := & c_{IJ}^K A_i^I A_j^J + \parGrad{A_i^K}{q^j} - \parGrad{A_j^K}{q^i}, \label{eq:Hamel_coeff_3}
\end{align}
while $\map{\hat{l}}{\spBaseVel\times TQ_m}{\bb{R}}$ is given by \eqref{eq:decoupled_Lagrangian_fm}
and $c_{IJ}^K$ are the structure constants of the Lie algebra $\spBaseVel$ defined by $ad_{e_I} e_J = c_{IJ}^K e_K$ for a basis $\{e_I\}$ of $\spBaseVel$.
\end{proposition}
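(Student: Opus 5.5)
The plan is to route the argument through Proposition~\ref{proposition:reduced_euler_lagrange_equivalence}. Since it is already established there (together with Theorem~\ref{theorem:pH_formulation_decoupled}) that the \pH dynamics are equivalent to the reduced Euler--Lagrange equations \eqref{eq:lagrangian_floating_base_manipulator}, it suffices to show that the Boltzmann--Hamel equations \eqref{eq:boltzmann_hamel_floating_base_manipulator} with the Hamel coefficients (\ref{eq:Hamel_coeff_1}--\ref{eq:Hamel_coeff_3}) coincide term by term with \eqref{eq:lagrangian_floating_base_manipulator}. Alternatively, one can compare directly with the first and third rows of \eqref{eq:decoupled_floating_base_manipulator_pH} after substituting Lemma~\ref{lemma:gradients_relation}. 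I will work in index notation, as in the proof of Lemma~\ref{lemma:mass_matrix_properties}.

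\emph{Step 1: the Lagrangian partial derivatives.} From \eqref{eq:decoupled_Lagrangian_fm} we have $\partial\hat l/\partial\hat\baseVel^K = (M_b\hat\baseVel)_K = \hat\baseMom_K$ and $\partial \hat l/\partial \dot q^i = (\hat M_m \dot q)_i = \hat\pi_i$. Moreover, $\partial\hat l/\partial q^i = E_b^\top\hat\baseVel + \hat C_m^\top \dot q$, by \eqref{eq:mass_matrix_property_grad_Mm} and \eqref{eq:mass_matrix_property_grad_Mb}, which is exactly $-\partial\hat{\cl H}_{kin}/\partial\hat q$ by Lemma~\ref{lemma:gradients_relation}.

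\emph{Step 2: the base equation.} The term $\gamma^K_{IJ}\hat\baseVel^J\hat\baseMom_K = c^K_{IJ}\hat\baseVel^J\hat\baseMom_K$ should be identified with $-(ad_{\hat\baseMom}^\sim \hat\baseVel)_I$. This requires checking the sign convention $(ad^\top_{\hat\baseVel}\hat\baseMom)_I = c^K_{JI}\hat\baseVel^J\hat\baseMom_K$ together with antisymmetry of the structure constants. Similarly, $\gamma^K_{Ij}\dot q^j\hat\baseMom_K = -c^K_{IJ}A^J_j\dot q^j\hat\baseMom_K$ gives $+(ad^\sim_{\hat\baseMom}A\dot q)_I$. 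The first Boltzmann--Hamel equation then reads $\dot{\hat\baseMom} = ad^\sim_{\hat\baseMom}(\hat\baseVel - A\dot q) + \hat\baseWrench$, which is the first row of \eqref{eq:decoupled_floating_base_manipulator_pH} with $\hat\baseWrench=\baseWrench$. Expanding $\dot{\hat\baseMom} = M_b\dot{\hat\baseVel} + \dot M_b\hat\baseVel$ with \eqref{eq:mass_matrix_property_Mb} reproduces the first row of \eqref{eq:lagrangian_floating_base_manipulator}, including the $P_b$ and $E_b$ entries of $\hat{\cl C}_1$ and the $ad^\sim$ entries of $\hat{\cl C}_2$.

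\emph{Step 3: the joint equation.} Here $\gamma^K_{iJ}\hat\baseVel^J\hat\baseMom_K = c^K_{IJ}A^I_i\hat\baseVel^J\hat\baseMom_K$ should give $(A^\top ad^\sim_{\hat\baseMom}\hat\baseVel)_i$ up to the same sign bookkeeping. The term $\gamma^K_{ij}\dot q^j\hat\baseMom_K$ splits into two pieces. The first, $c^K_{IJ}A^I_iA^J_j\dot q^j\hat\baseMom_K$, gives $-(A^\top ad^\sim_{\hat\baseMom}A\dot q)_i$. The second, $(\partial_jA^K_i - \partial_iA^K_j)\hat\baseMom_K\dot q^j$, is exactly $((L-L^\top)\dot q)_i$ in light of \eqref{eq:L_map_def}, or its negative depending on the index ordering convention for $L$. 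Summing these reproduces $\hat B\dot q$ from \eqref{eq:B_definition}. Combined with Step 1 and $\dot{\hat\pi} = \hat M_m\ddot q + \dot{\hat M}_m\dot q$ and \eqref{eq:mass_matrix_property_Mm}, this yields the second row of \eqref{eq:lagrangian_floating_base_manipulator}, with $-E_b^\top\hat\baseVel$ and $\hat C_m\dot q$ coming from $\hat{\cl C}_1$. Proposition~\ref{proposition:reduced_euler_lagrange_equivalence} then closes the equivalence.

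The main obstacle is the sign and index bookkeeping in Steps 2--3. This includes consistency between $ad_{e_I}e_J=c^K_{IJ}e_K$, the definition \eqref{eq:ad_sim_def} of $ad^\sim$, and the column convention in $L$. I would first fix these by verifying the single identity $(ad^\sim_{\baseMom}\bar\baseVel)_I = c^K_{JI}\bar\baseVel^J\baseMom_K$, and then check antisymmetry of $\gamma_{ij}^K$ in $(i,j)$ against the skew-symmetry of $B$ as a consistency test.
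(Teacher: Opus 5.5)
\textbf{Approach.} Your argument is, in substance, the paper's proof. The appendix writes the first and third rows of \eqref{eq:decoupled_floating_base_manipulator_pH} in components, using $[ad^\sim_{\baseMom}]_{IJ}=-c_{IJ}^K\baseMom_K$, $[L]_{ij}=\partial_j A_i^I\baseMom_I$ and $[B]_{ij}=\gamma_{ij}^K\baseMom_K$. It then substitutes $\hat\baseMom_K=\partial\hat l/\partial\hat\baseVel^K$, $\hat\pi_i=\partial\hat l/\partial\dot q^i$ and $\partial\subTxt{\hat{\cl{H}}}{kin}/\partial\hat q^i=-\partial\hat l/\partial q^i$. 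Your Steps 1--3 do exactly this.

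Once the momentum equations are matched, the further expansion into \eqref{eq:lagrangian_floating_base_manipulator} and the appeal to Proposition~\ref{proposition:reduced_euler_lagrange_equivalence} are redundant. Steps 1 and 2 are correct as written. So is your treatment of $\gamma_{ij}^K$: with the paper's $L$ there is no sign ambiguity, since $(\partial_jA_i^K-\partial_iA_j^K)\baseMom_K\dot q^j=((L-L^\top)\dot q)_i$ exactly.

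\textbf{The error.} Your coefficient $\gamma_{iJ}^K=c_{IJ}^KA_i^I$ has the wrong sign. Definition \eqref{eq:Hamel_coeff_2} gives $\gamma_{jI}^K=c_{IJ}^KA_j^J$: the free base index sits in the first slot of $c$, and the index contracted with $A$ sits in the second. Renaming properly gives $\gamma_{iJ}^K=c_{JI}^KA_i^I=-c_{IJ}^KA_i^I$.

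This is not harmless bookkeeping. Combine your version with your own (correct) Step 2 identity $c_{IJ}^K\hat\baseVel^J\hat\baseMom_K=-(ad^\sim_{\hat\baseMom}\hat\baseVel)_I$. You get $\gamma_{iJ}^K\hat\baseVel^J\hat\baseMom_K=-(A^\top ad^\sim_{\hat\baseMom}\hat\baseVel)_i$, hence $\dot{\hat\pi}=+A^\top ad^\sim_{\hat\baseMom}\hat\baseVel+\cdots$. That contradicts the $(3,1)$ block $-A^\top ad^\sim_{\hat\baseMom}$ of $\hat{\cl{J}}$. With the correct coefficient the term is $+(A^\top ad^\sim_{\hat\baseMom}\hat\baseVel)_i$, and the match goes through.

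Also state explicitly in Step 3 that the right-hand side $\hat\tau=\tau-A^\top\baseWrench$ comes from the third row of $\hat{\cl{G}}$. At present you identify the input only for the base equation.
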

\begin{proof}
    See Appendix \ref{sec:proof_of_boltzmann_hamel_Proposition}.
\end{proof}

\begin{remark}
    Equations \eqref{eq:boltzmann_hamel_floating_base_manipulator} have been referred to the Lagrange-Poincare equations in \cite{Cendra2001,bloch2004nonholonomic} as well
    as the reduced Euler-Lagrange equations in \cite{Marsden1993}.
\end{remark}

\section{Discussion} \label{sec:discussion}
In this section, we highlight the merits of the \pH formulation derived in this paper by comparing it with its Lagrangian counterparts and discussing several directions 
for future research to exploit the proposed framework.

\begin{itemize}

    \item We showed in several steps where does each term in \eqref{eq:lagrangian_floating_base_manipulator} come from in the pH formulation.
    In particular, we have shown which terms come from the symplectic structure of the manipulator, which terms come from the Lie-Poisson structure of the base, 
    and which terms appear due to the tangent and cotangent maps of $\phi(q)$ that represents a change of coordinates to the principal bundle coordinates.
    In the Lagrangian formulation \eqref{eq:lagrangian_floating_base_manipulator}, these terms are all mixed together in the Coriolis matrices $\hat{\cl{C}}_1$ and $\hat{\cl{C}}_2$,
    making it difficult to identify their origins.

    \item We showed that the \pH formulation is efficient in showing energy conservation and passivity properties, thanks to its explicit identification of the interconnection structure
    that simply does not affect any stability or passivity analysis.
    Doing the same using the Lagrangian formulation is more tedious as it uses the relations between $\hat{\cl{M}}$ and $\hat{\cl{C}}$ given in Lemma \ref{lemma:mass_matrix_properties}.
    Performing more advanced control synthesis (e.g. adaptive control, observer-based control, etc.) 
    is also expected to be more straightforward compared to the Lagrangian framework.
    In fact, by using bond graph representations of the \pH system, one can perform passivity analysis simply by visual inspection \cite{Rashad2022}. 

    \item In the derivation of the pH formulation presented in Theorem \ref{theorem:pH_formulation_decoupled}, we have reached it in stages by first deriving the \pH dynamics
    in terms of the original velocity variables $(\baseVel, \dot{q})$ in Theorem \ref{theorem:pH_formulation} and then performing a 
    change of coordinates to the inertially-decoupled velocities $(\hat{\baseVel}, \hat{\dot{q}})$. We have also excluded potential energy from the reduction process 
    by adding it later as an external input to the pH system, allowing us to utilize symmetry and perform Hamiltonian reduction for kinetic energy only.
    This approach contrasts with \cite{Moghaddam2024} where the equations were derived in one step in locked velocity form with respect to variations 
    of $\hat{\baseVel}$.

    \item   The Lagrangian equations \eqref{eq:lagrangian_floating_base_manipulator} correspond to the ones derived in \cite{Mishra2023} for the case of a 
    floating base manipulator.
    A slight difference is that in \cite{Mishra2023} the skew-symmetric part of $\hat{\cl{C}}_1$ involving $E_b$ was included in $\hat{\cl{C}}_2$.
    However, we choose to keep it in $\hat{\cl{C}}_1$ to highlight its origin from the time derivative of mass matrices and to highlight the origin 
    of $\hat{\cl{C}}_2$ from the interconnection matrix \eqref{eq:J_fm_hat_expression}.

    \item In \cite{Moghaddam2024} the same equations \eqref{eq:lagrangian_floating_base_manipulator} were presented. However, the term $E_b^\top$ in \eqref{eq:C1_hat} was missing a minus sign, which lead the authors to formulate the dynamics using a 
    symmetric matrix involving $E_b$ and $E_b^\top$ instead of the skew-symmetric form in \eqref{eq:C1_hat} consistent with \cite{Mishra2023}.
    This in fact can be shown to violate physical laws since a symmetric form would imply that the power associated with these terms is always positive or always negative.
    Thus, it would imply that energy is always injected into or always dissipated from the system by these terms.
    However, an energy analysis was not included in \cite{Moghaddam2024} to reveal this issue.
    This highlights the insight that the \pH framework provides into the energetic structure of the system.
    
\end{itemize}

The proposed \pH framework opens several avenues for future research, including but not limited to:
\begin{itemize}
    \item The passivity property of the \pH dynamics can be exploited to design energy-based controllers for physical interaction using systematic control methods such as 
    energy-balancing passivity-based control \cite{Rashad2019}, interconnection-damping-assignment passivity-based control \cite{yuksel2019aerial}, and control by interconnection \cite{Rashad2022}.
    \item The explicit energetic structure of the \pH formulation can be used to develop energy-preserving numerical integration schemes that yield physically accurate 
    computer simulations of VMS dynamics by preserving energy and momentum \cite{Betsch2007, WANG2026116708}.
    \item In this work we have focused only on unconstrained base movement. The use of Dirac structures in the \pH framework allows the straightforward incorporation of 
    holonomic and non-holonomic constraints \cite{duindam2009modeling_a}, enabling the modeling and control of VMS with legged- and wheeled- locomotion \cite{Hamad2023, KORAYEM20151701}.
\end{itemize}

\section{Conclusion} \label{sec:conclusion}
This paper presents a systematic port-Hamiltonian formulation for vehicle-manipulator systems, providing a unified framework that encompasses 
aerial manipulators, underwater manipulators, space robots, and omnidirectional mobile manipulators.
The key contribution lies in deriving the \pH dynamics from first principles using Hamiltonian reduction theory, 
which explicitly reveals the underlying energetic structure that is obscured in more common Lagrangian formulations.
We have rigorously demonstrated the mathematical equivalence between our \pH framework and existing reduced Euler-Lagrange and Boltzmann-Hamel formulations, 
while highlighting how the \pH structure makes the origins of various terms transparent.
The proposed inertially-decoupled formulation leverages the principal bundle structure of the configuration space, avoiding singularities associated with 
local parameterizations and providing a natural foundation for control synthesis and energy-preserving numerical integration schemes in future work.

\appendix
\section{Global Parametrization of Joints}\label{appendix:joint_subgroups}
\begin{enumerate}
    \item \textbf{1-DoF Revolute Joint:} For the case of a 1-DoF revolute joint, we have that $b=1$, $G_i^j = SO(2)$, $G_i = \bb{S}^1$, and $\mathfrak{g}_i=\bb{R}$.
    The joint configuration $q_i \in \bb{S}^1$ represents the rotation angle of the joint, while $\baseVel_i \in \bb{R}$ represents its angular velocity.
    The map $\chi_{q_i}$ is given by the identity map, i.e. $\dot{q}_i = \chi_{q_i}(\baseVel_i) = \baseVel_i$.
    The relative twist $\twist{i}{j}{i}$ allowed by the joint is given by 
    \begin{equation*}
        \twist{i}{j}{i} = \iota_i(\baseVel_i) =  \cl{S}_i^{i,j} \baseVel_i, \qquad \cl{S}_i^{i,j} := \TwoVec{\hat{n}_i}{q_i \times \hat{n}_i}
    \end{equation*}
    where $\hat{n}_i \in \bb{S}^2$ denotes the unit vector along the axis of rotation of the joint and $q_i \in \R{3}$ denotes 
    a vector from the origin of $\{i\}$ to any point on the joint axis. 
    The joint's generalized force $\baseWrench_i \in \bb{R}$ represents the torque applied by the joint, and is related to the wrench $\cl{W}^i \in (\R{6})^*$ acting 
    on body $i$ by $\baseWrench_i = (\cl{S}_i^{i,j})^\top \cl{W}^i.$

    The map $\varphi_i$ can be expressed in terms of the matrix exponential map \cite{lynch2017modern} as
    \begin{equation}
        H_i^j (q_i) = H_i^j (0) \exp(\tilde{\cl{S}}_i^{i,j} q_i),
    \end{equation}
    where $H_i^j(0) \in SE(3)$ denotes the relative pose of body $i$ with respect to body $j$ when the joint angle is zero.

    \item \textbf{1-DoF Prismatic Joint:} For the case of a 1-DoF prismatic joint, we have that $b=1$, $G_i^j = \bb{R}$, $G_i = \bb{R}$, and $\mathfrak{g}_i=\bb{R}$. 
    The joint configuration $q_i \in \bb{R}$ represents the linear displacement of the joint, $\baseVel_i \in \bb{R}$ represents its linear velocity, and $\baseWrench_i \in \bb{R}$ represents
    the force applied by the joint.
    The same construction as in the revolute joint case holds here, with the only difference that $\cl{S}_i^{i,j} \in \R{6}$ is given by
    \begin{equation*}
        \cl{S}_i^{i,j} := \TwoVec{0_{3\times1}}{\hat{n}_i}
    \end{equation*}
    where $\hat{n}_i \in \bb{S}^2$ denotes the unit vector along the direction of translation of the joint.
    \item \textbf{3-DoF Planar Joint:} A planar joint allows for 2 translational DoFs and 1 rotational DoF about an axis $\hat{n}_i\in\bb{S}^2$ normal to the plane of motion of the joint.
    Thus, we have that $b=3$, $G_i^j = SE(2)$, $G_i = \bb{S}^1\times \R{2}$, and $\mathfrak{g}_i=\bb{R}^3$.
    Assuming for simplicity, that $\hat{n}_i$ is aligned with the $\hat{z}$ axis of the joint frame $\{i\}$.
    The joint configuration is then given by $q_i= (\theta_i,\xi_{x,i},\xi_{y,i})\in \bb{S}^1\times \R{2}$ where $\theta_i \in \bb{S}^1$ denotes the rotation angle of the joint about the axis $\hat{n}_i$ and
    $(\xi_{x,i},\xi_{y,i}) \in \R{2}$ denote the translational components.
    The joint velocity is given by $\baseVel_i = (\omega_i,v_{x,i},v_{y,i}) \in \R{3}$ where $\omega_i \in \R{}$ denotes the angular velocity of the joint and
    $(v_{x,i},v_{y,i}) \in \R{2}$ denote the linear velocity components.
    The map $\chi_{q_i}$ is given by
    \begin{equation}
        \dot{q}_i = \chi_{q_i}(\baseVel_i) = \begin{pmatrix}
        1 & 0 & 0 \\
        0 & c_{\theta_i} & -s_{\theta_i} \\
        0 & s_{\theta_i} & c_{\theta_i}
        \end{pmatrix} \ThrVec{\omega_i}{v_{x,i}}{v_{y,i}},
    \end{equation}
    where $c_{\theta_i} := \cos(\theta_i)$ and $s_{\theta_i} := \sin(\theta_i)$.

    The map $\iota_i$ characterizing the relative twist $\twist{i}{j}{i} \in \R{6}$ allowed by the joint is given by
    \begin{equation*}
        \twist{i}{j}{i} = \iota_i(\baseVel_i) =  \begin{pmatrix}
        0 & 0 & 0 \\
        0 & 0 & 0 \\
        1 & 0 & 0 \\
        0 & 1 & 0 \\
        0 & 0 & 1 \\
        0 & 0 & 0 \\
        \end{pmatrix} \ThrVec{\omega_i}{v_{x,i}}{v_{y,i}}, 
    \end{equation*}
    while the map $\varphi_i$ can be expressed as
    \begin{equation}
        H_i^j (q_i) = 
        \begin{pmatrix}
        c_{\theta_i} & -s_{\theta_i} & 0 & \xi_{x,i} \\
        s_{\theta_i} & c_{\theta_i} & 0 & \xi_{y,i} \\
        0 & 0 & 1 & 0 \\
            0 & 0 & 0 & 1
        \end{pmatrix}
    \end{equation}
    
    \item \textbf{6-DoF Floating Joint:} A floating joint allows for 3 translational DoFs and 3 rotational DoFs.
    Thus, we have that $b=6$, $G_i^j = SE(3)$, $G_i = SE(3)$, and $\mathfrak{g}_i$ is isomorphic to $\R{6}$.
    The map $\varphi_{i}$ and $\iota_i$ are given by the identity maps, i.e. the joint configuration is given by $q_i = H_i^j \in SE(3)$ 
    while the joint velocity is given by $\baseVel_i = \twist{i}{j}{i} \in \R{6}$ and the joint's generalized force is given by $\baseWrench_i = \cl{W}^i \in (\R{6})^*$.
    The map $\chi_{q_i}$ is given explicitly by
    \begin{equation}
        \dot{H}_i^j = (H_i^j)^{-1} \twistTd{i}{j}{i}.
    \end{equation}

\end{enumerate}

\section{Reduced Variational Principle Theorem \cite{Marsden1993}} \label{sec:reduced_variational_Theorem}

Let $\cl{Q}:= G_b \times Q_m$ be a smooth manifold and let $G_b$ be a Lie group acting freely and properly on $\cl{Q}$.
We denote by $\spBaseVel$ the Lie algebra associated with the Lie group $G_b$.
We have that the tangent bundle $T\cl{Q}$ is diffeomorphic to $TG_b\times TQ_m$ which in turn is diffeomorphic to $G_b \times \spBaseVel \times TQ_m$ 
by the left trivialization of the tangent bundle $TG_b$.
Let points of $\cl{Q}$ be denoted by $\mathfrak{q} = (h,q)$ with $h \in G_b$ and $q \in Q_m$.

Let $\mathscr{L}: T\cl{Q} \to \bb{R}$ be a $G_b$-invariant Lagrangian. Then $\mathscr{L}$ satisfies the variational principle
$$\delta \int_a^b \mathscr{L}(\mathfrak{q},\dot{\mathfrak{q}}) \extd t = 0,$$
for variations $\delta \mathfrak{q} \in T_\mathfrak{q}\cl{Q}$ vanishing at the endpoints if and only if the reduced Lagrangian $\map{\cl{L}}{\spBaseVel\times TQ_m}{\bb{R}}$ 
defined by $\cl{L}(\baseVel,q,\dot{q}) := \mathscr{L}(h,\dot{h},q,\dot{q})$ with $\baseVel = \chi_{h^{-1}}(\dot{h}) \in \spBaseVel$ satisfies the reduced variational principle
\begin{equation}\label{eq:reduced_L}
    \delta \int_a^b \cl{L}(\baseVel,q,\dot{q}) \extd t = 0,
\end{equation}
for variations $\delta q \in T_q Q_m$ vanishing at the endpoints and variations $\delta \baseVel \in \spBaseVel$ of the form
$$\delta \baseVel = \dot{\eta} + ad_\baseVel \eta,$$
where $\eta$ is an arbitrary curve in $\spBaseVel$ vanishing at the endpoints
and $\map{ad}{\spBaseVel\times \spBaseVel}{\spBaseVel}$ is the adjoint operator of the Lie algebra $\spBaseVel$.

\section{Derivation of the \pH dynamics \eqref{eq:floating_base_manipulator_pH}} \label{sec:proof_of_pH_theorem}

\subsection{Proof of Theorem \ref{theorem:pH_formulation}}

The derivation of the \pH dynamics \eqref{eq:floating_base_manipulator_pH} of a VMS follows from Appendix \ref{sec:reduced_variational_Theorem} using \eqref{eq:Lagrangian_fm} as the reduced Lagrangian.
The variational principle \eqref{eq:reduced_L} can be rewritten as
\begin{equation}\label{eq:variational_principle_expanded}
    \delta \int_a^b 
    \underbrace{\pair{\parGrad{\subTxt{\cl{L}}{kin}}{\baseVel}}{\delta \baseVel}}_{\text{(i)}} + 
    \underbrace{\pair{\parGrad{\subTxt{\cl{L}}{kin}}{q}}{\delta q} + 
    \pair{\parGrad{\subTxt{\cl{L}}{kin}}{\dot{q}}}{\delta \dot{q}}}_{\text{(ii)}} \extd t = 0,
\end{equation}

By applying the Legendre transformation, we define the generalized momenta $\baseMom := \parGrad{\subTxt{\cl{L}}{kin}}{\baseVel}(\baseVel,q,\dot{q}) \in \spBaseVel^*$
and $\pi := \parGrad{\subTxt{\cl{L}}{kin}}{\dot{q}}(\baseVel,q,\dot{q}) \in T_q^* Q_m$ conjugate to the generalized velocities $\baseVel \in \spBaseVel$ and $\dot{q} \in T_q Q_m$, respectively.
The Hamiltonian $\subTxt{\cl{H}}{kin}: \spBaseVel^* \times T^* Q_m \to \bb{R}$ is then defined as the total energy of the system given by
\begin{equation}\label{eq:Hamiltonian_fm_Definition}
    \subTxt{\cl{H}}{kin}(\baseMom,q,\pi) := \pair{\baseMom}{\baseVel} + \pair{\pi}{\dot{q}} - \subTxt{\cl{L}}{kin}(\baseVel,q,\dot{q}),
\end{equation}
where the generalized velocities $\baseVel$ and $\dot{q}$ are expressed in terms of the generalized momenta $\baseMom$ and $\pi$ via the inverse Legendre transformation.
By construction, it follows that 
\begin{equation}\label{eq:gradients_Hamiltonian_fm}
    \parGrad{\subTxt{\cl{H}}{kin}}{\baseMom} = \baseVel ,\quad
    \parGrad{\subTxt{\cl{H}}{kin}}{\pi} = \dot{q},\quad
    \parGrad{\subTxt{\cl{H}}{kin}}{q} = -\parGrad{\subTxt{\cl{L}}{kin}}{q}.
\end{equation}
It is straightforward to show that \eqref{eq:Hamiltonian_fm_Definition} can be expressed as \eqref{eq:Hamiltonian_fm}.

The term (i) in \eqref{eq:variational_principle_expanded} can be rewritten using the definition of $\baseMom$ and integrated by parts as
\begin{align*}
    \int_a^b \pair{\baseMom}{\delta \baseVel} \extd t 
    = \int_a^b \pair{\baseMom}{\dot{\eta} + ad_\baseVel \eta} \extd t 
    = \int_a^b  -\pair{\dot{\baseMom}}{\eta} + \pair{ad_\baseVel^\top \baseMom}{\eta} \extd t 
    = \int_a^b \pair{-\dot{\baseMom} + ad_\baseMom^\sim \baseVel}{\eta} \extd t,
\end{align*}
where we have used the expression of $\delta \baseVel$ from Appendix \ref{sec:reduced_variational_Theorem}.

Similarly, the term (ii) in \eqref{eq:variational_principle_expanded} can be rewritten using the definition of $\pi$ and integrated by parts as
\begin{align*}
    \int_a^b \pair{\parGrad{\subTxt{\cl{L}}{kin}}{q}}{\delta q} + \pair{\pi}{\delta \dot{q}} \extd t 
    = \int_a^b \pair{\parGrad{\subTxt{\cl{L}}{kin}}{q} - \dot{\pi}}{\delta q} \extd t 
\end{align*}
Combining the two terms and using the definition of the Hamiltonian gradients \eqref{eq:gradients_Hamiltonian_fm}, 
the variational principle \eqref{eq:variational_principle_expanded} becomes
\begin{equation*}
    \int_a^b 
    \pair{-\dot{\baseMom} + ad_\baseMom^\sim \parGrad{\subTxt{\cl{H}}{kin}}{\baseMom}}{\eta} +
    \pair{-\dot{\pi} - \parGrad{\subTxt{\cl{H}}{kin}}{q}}{\delta q} \extd t = 0,
\end{equation*}
for arbitrary variations $\eta$ and $\delta q$ vanishing at the endpoints.
Since $\eta$ and $\delta q$ are arbitrary, the integrand must vanish, which gives the equations of motion
\begin{equation*}
    \dot{\baseMom} = ad_\baseMom^\sim \parGrad{\subTxt{\cl{H}}{kin}}{\baseMom}, \qquad \qquad
    \dot{q} = \parGrad{\subTxt{\cl{H}}{kin}}{\pi}, \qquad \qquad
    \dot{\pi} = - \parGrad{\subTxt{\cl{H}}{kin}}{q}.
\end{equation*}
which can be compactly written as
\begin{equation}\label{eq:Jfm_pH}
    \dot{x} = \cl{J}(x) \parGrad{\subTxt{\cl{H}}{kin}}{x}. 
\end{equation}
which proves $ \cl{J}(x) = \cl{J}(\baseMom)$ in \eqref{eq:floating_base_manipulator_pH}.

To include the external forces and torques acting on the VMS, it directly follows from the Lagrange-d'Alembert principle that the momentum
equations are modified to
\begin{equation*}
    \dot{\baseMom} = ad_\baseMom^\sim \parGrad{\subTxt{\cl{H}}{kin}}{\baseMom} + \baseWrench, \qquad\qquad
    \dot{\pi} = - \parGrad{\subTxt{\cl{H}}{kin}}{q} + \tau.
\end{equation*}

\subsection{Proof of Corollary \ref{corollary:pH_formulation}}
It follows from the skewsymmetry of $ad_\baseMom^\sim$ that $\cl{J}(x)$ is skew-symmetric.
Therefore, we have that along trajectories $x(t) = (\baseMom(t), q(t), \pi(t))$ of the \pH dynamics \eqref{eq:Jfm_pH}, the Hamiltonian satisfies
\begin{equation}\label{eq:power_balance_Jfm}
    \subTxt{\dot{\cl{H}}}{kin} = \pair{\parGrad{\subTxt{\cl{H}}{kin}}{x}}{\dot{x}} = \pair{\parGrad{\subTxt{\cl{H}}{kin}}{x}}{\cl{J}(x) \parGrad{\subTxt{\cl{H}}{kin}}{x}}  = 0.
\end{equation}

Adding the external forces and torques, we have the power balance \eqref{eq:power_balance_Jfm} to be extended as
\begin{align*}
    \subTxt{\dot{\cl{H}}}{kin} = \pair{\parGrad{\subTxt{\cl{H}}{kin}}{x}}{\cl{G}
        \TwoVec{\baseWrench}{\tau}}  = \pair{\cl{G}^\top \parGrad{\subTxt{\cl{H}}{kin}}{x}}{\TwoVec{\baseWrench}{\tau}}
        = \pair{\TwoVec{\baseVel}{\dot{q}}}{\TwoVec{\baseWrench}{\tau}} = \pair{\baseVel}{\baseWrench} + \pair{\dot{q}}{\tau}. 
\end{align*}

\section{Derivation of the inertially decoupled \pH dynamics \eqref{eq:decoupled_floating_base_manipulator_pH}} \label{sec:proof_of_pH_theorem_decoupled}

\subsection{Proof of Theorem \ref{theorem:pH_formulation_decoupled}}

The expression of the Hamiltonian \eqref{eq:Hamiltonian_fm_decoupled} in the inertially-decoupled coordinates follows directly from Lemma \ref{lemma:phi_map} as follows.
Let the map $\map{\Psi(q)}{\spBaseVel\times T_q Q_m}{\spBaseVel\times T_q Q_m}$ be defined as
\begin{equation}\label{eq:psi_map_definition}
    \Psi(q) := \TwoTwoMat{\bb{I}_b}{A(q)}{0_{n\times b}}{\bb{I}_n},
\end{equation}
such that the velocities and corresponding momenta transform as
\begin{equation}\label{eq:transformation_velocities}
\TwoVec{\hat\baseVel}{\dot{\hat{q}}} = \Psi(q) \TwoVec{\baseVel}{\dot{q}}, \qquad \TwoVec{\hat\baseMom}{\hat{\pi}} = \Psi^{-\top}(q) \TwoVec{\baseMom}{\pi}.
\end{equation}
Using \eqref{eq:psi_map_definition}, we can express $\cl{M}(q)$ in \eqref{eq:Lagrangian_fm} as $\cl{M}(q) = \Psi^\top(q) \hat{\cl{M}}(q) \Psi(q)$. Consequently, its inverse is given by
\begin{equation}\label{eq:mass_matrix_inverse_transformation}
    \cl{M}^{-1}(q) = \Psi^{-1}(q) \hat{\cl{M}}^{-1}(q) \Psi^{-\top}(q),
\end{equation}
with $\hat{\cl{M}}(q)$ defined in \eqref{eq:mass_matrix_decoupled}.

Substituting \eqref{eq:transformation_velocities} and \eqref{eq:mass_matrix_inverse_transformation} into \eqref{eq:Hamiltonian_fm}, we obtain the Hamiltonian in the inertially-decoupled coordinates as
\begin{align*}
    \subTxt{\cl{H}}{kin}(\baseMom,q,\pi) =& \half \TwoVec{\baseMom}{\pi}^\top \cl{M}^{-1}(q) \TwoVec{\baseMom}{\pi}
    = \half \TwoVec{\baseMom}{\pi}^\top\Psi^{-1}(q) \hat{\cl{M}}^{-1}(q) \Psi^{-\top}(q) \TwoVec{\baseMom}{\pi}, \\
    =& \half \TwoVec{\hat\baseMom}{\hat{\pi}}^\top \hat{\cl{M}}^{-1}(q) \TwoVec{\hat\baseMom}{\hat{\pi}}
    =  \half \hat\baseMom^\top M_b^{-1}(\hat{q}) \hat\baseMom + \half \hat{\pi}^\top \hat{M}_m^{-1}(\hat{q}) \hat{\pi},
\end{align*}
which proves \eqref{eq:Hamiltonian_fm_decoupled}.

The gradients of the Hamiltonian \eqref{eq:Hamiltonian_grad_p_hat} and \eqref{eq:Hamiltonian_grad_pi_hat} follow directly from its expression. As for the gradient \eqref{eq:Hamiltonian_grad_q_hat}, 
it follows from the definition of the Hamiltonian \eqref{eq:Hamiltonian_fm_decoupled} that
\begin{equation}\label{eq:gradient_H_hat_q}
    \parGrad{\subTxt{\hat{\cl{H}}}{kin}}{\hat{q}} = \half \parGrad{}{\hat{q}} \left(\hat\baseMom^\top M_b^{-1}(\hat{q}) \hat\baseMom + \hat{\pi}^\top \hat{M}_m^{-1}(\hat{q}) \hat{\pi} \right).
\end{equation}
The first term in \eqref{eq:gradient_H_hat_q} can be expressed, using the symmetry of $M_b$, as
$$\parGrad{}{\hat{q}} \left[ \hat\baseMom^\top M_b^{-1}(\hat{q}) \hat\baseMom \right] = \ThrVec{\hat\baseMom^\top \parGrad{M_b^{-1}}{\hat{q}_1}(\hat{q})}{\vdots}{\hat\baseMom^\top \parGrad{M_b^{-1}}{\hat{q}_n}(\hat{q})} \hat\baseMom = N_b^\top(\hat{\baseMom},\hat{q}) \hat{\baseMom},$$ 
with $N_b(\hat{\baseMom},\hat{q})$ defined as in \eqref{eq:N_b_and_N_m_definitions}.
Similarly, the same procedure can be applied to the second term in \eqref{eq:gradient_H_hat_q} to obtain
$$\parGrad{}{\hat{q}} \left[ \hat{\pi}^\top \hat{M}_m^{-1}(\hat{q}) \hat{\pi} \right] = N_m^\top(\hat{\pi},\hat{q}) \hat{\pi}.$$

The derivation of the \pH dynamics \eqref{eq:decoupled_floating_base_manipulator_pH} follows from Lemma \ref{lemma:tangent_phi} 
which establishes the coordinate transformation from $\dot{x} := (\dot{\baseMom},\dot{q},\dot{\pi})$ 
to $\dot{\hat{x}} := (\dot{\hat\baseMom},\dot{\hat{q}},\dot{\hat{\pi}})$
and, by duality, the coordinate transformation of the gradients of any function $\map{\cl{F}}{\cl{X}}{\bb{R}}$ as
\begin{equation}\label{eq:transformation_xdot}
\ThrVec{\dot{\hat\baseMom}}{\dot{\hat{q}}}{\dot{\hat{\pi}}} = \Phi(\baseMom,q) \ThrVec{\dot{\baseMom}}{\dot{q}}{\dot{\pi}}, \qquad  \qquad
\ThrVec{\parGrad{\cl{F}}{\baseMom}}{\parGrad{\cl{F}}{q}}{\parGrad{\cl{F}}{\pi}} = 
\Phi^\top(\baseMom,q) \ThrVec{\parGrad{\hat{\cl{F}}}{\hat\baseMom}}{\parGrad{\hat{\cl{F}}}{\hat{q}}}{\parGrad{\hat{\cl{F}}}{\hat{\pi}}},
\end{equation}
with $\map{\hat{\cl{F}}}{\cl{X}}{\bb{R}}$ defined by $\hat{\cl{F}} := \cl{F} \circ \phi^{-1}(q)$ such that
\begin{align*}
\pair{\parGrad{\cl{F}}{\baseMom}}{\dot{\baseMom}} + \pair{\parGrad{\cl{F}}{q}}{\dot{q}} + \pair{\parGrad{\cl{F}}{\pi}}{\dot{\pi}} =
\pair{\parGrad{\hat{\cl{F}}}{\hat\baseMom}}{\dot{\hat\baseMom}} + \pair{\parGrad{\hat{\cl{F}}}{\hat{q}}}{\dot{\hat{q}}} + \pair{\parGrad{\hat{\cl{F}}}{\hat{\pi}}}{\dot{\hat{\pi}}},
\end{align*}
or equivalently
\begin{equation} \label{eq:F_dot_equality}
    \dot{\cl{F}} = \pair{\parGrad{\cl{F}}{x}}{\dot{x}} = \pair{\parGrad{\hat{\cl{F}}}{\hat{x}}}{\dot{\hat{x}}} = \dot{\hat{\cl{F}}}.
\end{equation}

Now if we combine \eqref{eq:floating_base_manipulator_pH} with \eqref{eq:transformation_xdot}, we obtain
\begin{equation*}
    \dot{\hat{x}} = \Phi(\baseMom,q) \cl{J}(\baseMom) \Phi^\top(\baseMom,q) \parGrad{\subTxt{\hat{\cl{H}}}{kin}}{\hat{x}} + \Phi(\baseMom,q) \cl{G} \TwoVec{\baseWrench}{\tau}.
\end{equation*}
It can be shown by introducing \eqref{eq:J_fm_hat_definition}, \eqref{eq:G_fm_hat_definition} and \eqref{eq:B_definition} that the interconnection and input matrices
in \eqref{eq:J_fm_hat_expression} and \eqref{eq:G_fm_hat_expression} follow by direct computation.

\subsection{Proof of Corollary \ref{corollary:pH_formulation_decoupled}}
The power balance \eqref{eq:fm_power_balance_decoupled_1} of the \pH dynamics \eqref{eq:decoupled_floating_base_manipulator_pH} follows 
directly from \eqref{eq:F_dot_equality} with $\cl{F} = \subTxt{\cl{H}}{kin}$ and $\hat{\cl{F}} = \subTxt{\hat{\cl{H}}}{kin}$, and the skew-symmetry of $\hat{\cl{J}}(\hat x)$.
The equivalent expression \eqref{eq:fm_power_balance_decoupled_2} follows directly from
\begin{align*}
    \subTxt{\dot{\hat{\cl{H}}}}{kin} = \pair{\parGrad{\subTxt{\hat{\cl{H}}}{kin}}{\hat{x}}}{\dot{\hat{x}}} = 
     \pair{\parGrad{\subTxt{\hat{\cl{H}}}{kin}}{\hat{x}}}{\hat{\cl{G}} \TwoVec{{\baseWrench}}{{\tau}}}
     =\pair{\TwoVec{\hat{\baseVel}}{\hat{\dot{q}}}}{\TwoVec{\hat{\baseWrench}}{\hat{\tau}}} = \pair{\hat{\baseVel}}{\hat{\baseWrench}} + \pair{\hat{\dot{q}}}{\hat{\tau}}.
\end{align*}

\section{Proof of Prop. \ref{proposition:reduced_euler_lagrange_equivalence}}\label{sec:proof_of_reducedEL_Proposition}
First, we differentiate the momentum relations $\hat\baseMom = M_b \hat{\baseVel}$ and $\hat{\pi} = \hat{M}_m \dot{q}$ to obtain:
\begin{equation*}
\dot{\hat\baseMom} = \dot{M}_b \hat{\baseVel} + M_b \dot{\hat{\baseVel}}, \qquad
\dot{\hat{\pi}} = \dot{\hat{M}}_m \dot{q} + \hat{M}_m \ddot{q}.
\end{equation*}
Using the mass matrix properties from Lemma \ref{lemma:mass_matrix_properties}, we have:
$\dot{M}_b \hat{\baseVel} = E_b \dot{q} + P_b \hat{\baseVel}$ and
$\dot{\hat{M}}_m \dot{q} = \hat{C}_m \dot{q} + \hat{C}_m^\top \dot{q}$.
Substituting these into the momentum time derivatives:
\begin{align}
\dot{\hat\baseMom} &= E_b \dot{q} + P_b \hat{\baseVel} + M_b \dot{\hat{\baseVel}}, \label{eq:mom_hat_dot_expr}\\
\dot{\hat{\pi}} &= \hat{C}_m \dot{q} + \hat{C}^\top_m \dot{q} + \hat{M}_m \ddot{q}, \label{eq:pi_hat_dot_expr}
\end{align}

Then using the gradient relations from Lemma \ref{lemma:gradients_relation}, we substitute them in the 
first equation of \eqref{eq:decoupled_floating_base_manipulator_pH} to get
\begin{equation}\label{eq:base_mom_dyn_decoupled}
\dot{\hat\baseMom} = ad_{\hat\baseMom}^\sim \hat{\baseVel} - ad_{\hat\baseMom}^\sim A \dot{q} + \hat{\baseWrench},
\end{equation}
which can be rewritten using \eqref{eq:mom_hat_dot_expr} as
\begin{equation}
M_b \dot{\hat{\baseVel}} = -P_b \hat{\baseVel} - E_b \dot{q} + ad_{M_b \hat{\baseVel}}^\sim \hat{\baseVel} - ad_{M_b \hat{\baseVel}}^\sim A \dot{q} + \hat{\baseWrench},
\end{equation}

Similarly, for the third equation in \eqref{eq:decoupled_floating_base_manipulator_pH}, we have that
\begin{equation}\label{eq:man_mom_dyn_decoupled}
\dot{\hat{\pi}} = - A^\top ad_{\hat\baseMom}^\sim \hat{\baseVel} + \hat{C}^\top_m \dot{q} + E_b^\top \hat{\baseVel} - \hat{B} \dot{q} + \hat{\tau},
\end{equation}
which can be rewritten using \eqref{eq:pi_hat_dot_expr} as
\begin{equation}
\hat{M}_m \ddot{q} = -\hat{C}_m \dot{q} - E_b^\top \hat{\baseVel} - A^\top ad_{M_b \hat{\baseVel}}^\sim \hat{\baseVel} - \hat{B} \dot{q} + \hat{\tau}.
\end{equation}
Collecting terms and using the definitions of $\cl{C}_1$ and $\cl{C}_2$ from the proposition concludes the proof.

\section{Proof of Prop. \ref{proposition:boltzmann_hamel_equivalence}}\label{sec:proof_of_boltzmann_hamel_Proposition}
First of all, we start by expressing the momentum dynamics in \eqref{eq:base_mom_dyn_decoupled} and \eqref{eq:man_mom_dyn_decoupled} 
in component form using \eqref{eq:Hamiltonian_grad_q_hat_2} as:
\begin{align}
    \dot{\baseMom}_I =& [ad_\baseMom^\sim]_{IJ} \hat{\baseVel}^J - [ad_\baseMom^\sim]_{IJ} A_j^J \dot{q}^j + \hat{\baseWrench}_I, \label{eq:base_mom_dyn_components}\\
    \dot{\hat{\pi}}_i =& - A_i^I [ad_\baseMom^\sim]_{IJ} \hat{\baseVel}^J - [B(\baseMom,q)]_{ij} \dot{q}^j - \parGrad{\subTxt{\hat{\cl{H}}}{kin}}{q^i} + \hat{\tau}_i, \label{eq:man_mom_dyn_components}
\end{align}
where $A_i^I$ denotes the components of the matrix $A(q)$ and we used the fact that $\hat{\baseMom}=\baseMom$.

From the definition of structural constants $c_{IJ}^K$, it follows that the components of the adjoint operator $ad_\baseVel$ can be expressed as
$$[ad_\baseVel]_J^K = c_{IJ}^K \baseVel^I.$$
Consequently, the components of the map $ad_\baseMom^\sim$, defined in \eqref{eq:ad_sim_def}, can be expressed as
\begin{equation}\label{eq:ad_sim_components}
    [ad_\baseMom^\sim]_{IJ} = - c_{IJ}^K \baseMom_K.    
\end{equation}
Then from the definitions of $L(\baseMom,q)$ in \eqref{eq:L_map_def} and $B(\baseMom,q)$ in \eqref{eq:B_definition}, 
it follows that their components can be expressed using \eqref{eq:ad_sim_components} as
\begin{align*}
    [L(\baseMom,q)]_{ij} =& \parGrad{A_i^I}{q^j} \baseMom_I, \\
    [B(\baseMom,q)]_{ij} =&  \left(c_{IJ}^K A_i^I A_j^J + \parGrad{A_i^K}{q^j} - \parGrad{A_j^K}{q^i}\right) \baseMom_K.
\end{align*}

Using the momenta definitions in Lemma \ref{lemma:phi_map}, we can replace ${\baseMom}_I$ and $\hat{\pi}_i$ in \eqref{eq:base_mom_dyn_components} 
and \eqref{eq:man_mom_dyn_components} using $\parGrad{\hat{l}}{\hat \baseVel^I}$ and $\parGrad{\hat{l}}{\dot{\hat q}^i}$, respectively, 
as well as $\parGrad{\subTxt{\hat{\cl{H}}}{kin}}{q^i}$ using  $-\parGrad{\hat{l}}{q^i}$.
Finally, using the skew-symmetry of the structural constants $c_{IJ}^K = - c_{JI}^K$ and by introducing (\ref{eq:Hamel_coeff_1}-\ref{eq:Hamel_coeff_3}),
we can rearrange the terms to obtain the Boltzmann-Hamel equations \eqref{eq:boltzmann_hamel_floating_base_manipulator}.

\bibliographystyle{elsarticle-num}  
\bibliography{pH_VMS_References.bib}

\end{document}